\documentclass[final,twoside,11pt]{entics} 
\usepackage{enticsmacro}
\usepackage{amsmath}
\usepackage{amsfonts}
\usepackage{proof}
\usepackage{cmll}
\usepackage{stmaryrd}
\usepackage[shortlabels]{enumitem}
\usepackage{graphicx}
\usepackage{hyperref}
\sloppy

\newtheorem{notation}{Notation}

\newcommand{\imp}{\quad\Rightarrow\quad}
\newcommand{\comb}[1]{\mathbf{#1}}
\newcommand{\Om}{\boldsymbol{\Omega}}
\newcommand{\bsub}{\begin{enumerate}}
\newcommand{\esub}{\end{enumerate}}
\renewcommand{\bar}{\begin{array}}
\newcommand{\ear}{\end{array}}
\newcommand{\nat}{{\mathbb N}}
\newcommand{\st}{\mid}
\newcommand\set[1]{\{#1\}}
\newcommand{\lam}{\ensuremath{\lambda}} 
\newcommand{\tuple}[1]{\langle #1 \rangle}
\newcommand{\Tuple}[1]{\left\langle\begin{aligned} #1 \end{aligned}\right\rangle}
\newcommand{\cI}{\mathcal{I}}
\newcommand{\cR}{\mathcal{R}}

\newcommand{\und}{\ \land \ }
\newcommand{\E}{\mathsf{E}}
\newcommand{\Val}{\mathrm{Val}}
\newcommand{\redd}{\Downarrow}
\newcommand{\Var}{\mathrm{Var}}
\newcommand{\PCF}{{\sf PCF}}
\newcommand{\tint}{\mathsf{int}}
\newcommand{\pred}{\mathbf{pred}\,}
\renewcommand{\succ}{\mathbf{succ}\,}
\newcommand{\ifterm}[3]{\mathbf{ifz}(#1,#2,#3)}
\newcommand{\fix}{\mathbf{fix}\,}
\newcommand{\num}{\underline}
\newcommand{\subst}[2]{[#2/#1]}
\newcommand{\FV}[1]{\mathrm{FV}(#1)}

\newcommand{\esubst}[2]{#1 \langle #2 \rangle}
\newcommand{\substseq}[2]{#1 \triangleright #2}
\newcommand{\EPCF}{{\sf EPCF}}

\newcommand{\dom}{\mathrm{dom}}


\newcommand{\valrule}{(\mathrm{nat})}
\newcommand{\PCFvalrule}{(\mathrm{val})}
\newcommand{\funrule}{(\mathrm{fun})}
\newcommand{\varrule}{(\mathrm{var})}
\newcommand{\predrule}{(\mathrm{pr})}
\newcommand{\succrule}{(\mathrm{sc})}
\newcommand{\predzrule}{(\mathrm{pr_0})}
\newcommand{\ifzzrule}{(\mathrm{ifz_0})}
\newcommand{\ifzrule}{(\mathrm{ifz_{>0}})}
\newcommand{\betarule}{(\beta_v)}
\newcommand{\fixrule}{(\mathrm{fix})}

\newcommand{\Types}{\mathbb{T}}
\newcommand{\size}[1]{\vert#1\vert}

\newcommand{\Addrs}{\mathbb{A}}
\newcommand{\Tapes}[1][\Addrs]{\mathcal{T}_{#1}}
\newcommand\cM[1][\Addrs]{\mathcal{M}_{#1}}
\newcommand{\appT}[2]{#1\,@\,#2\,}
\newcommand{\append}[2]{\appT{#1}{{[}#2{]}}}
\newcommand{\Lookinv}[1]{\#^{-1}(#1)}
\newcommand{\Lookup}{\#}
\newcommand{\App}[2]{#1\cdot #2}
\newcommand{\Cons}[2]{#1::#2}
\newcommand{\nnat}[1][+]{\nat^{#1}}
\newcommand{\repl}[2]{[#1:=#2]}
\newcommand{\stuck}[1]{\mathsf{stuck}(#1)}
\newcommand{\trans}[3]{\llbracket #1\rrbracket^{#3}}
\newcommand{\etrans}[3]{\llparenthesis #1 \rrparenthesis^{#3}}
\newcommand{\mPredn}[1]{\mach{Pred}_{#1}}
\newcommand{\mSuccn}[1]{\mach{Succ}_{#1}}
\newcommand{\mIfZn}[1]{\mach{Ifz}_{#1}}
\newcommand{\mProj}[2]{\mach{Pr}_{#2}^{#1}}
\newcommand{\mAppn}[1]{\mach{Apply}_{#1}}
\newcommand{\mYn}[1]{\mach{Y}_{#1}}

\newcommand{\mach}{\mathsf} 
\newcommand{\Null}{\varnothing}
\newcommand{\val}[1]{\oc #1}
\newcommand{\mM}{\mach{M}}
\newcommand{\mN}{\mach{N}}

\newcommand{\ins}[1]{\mathtt{#1}}
 \newcommand{\Call}[1]{\ins{Call}~#1}
\newcommand{\RaS}[1]{\ins{Load}~#1}
\newcommand{\Load}[1]{\ins{Load}~#1}
\newcommand{\Ifz}[4]{#4\shortleftarrow \ins{Test}(#1,\,#2,\,#3)}
\newcommand{\Pred}[2]{#2\shortleftarrow \ins{Pred}(#1)}
\newcommand{\Succ}[2]{#2\shortleftarrow \ins{Succ}(#1)}
 \newcommand{\Apply}[3]{#3\shortleftarrow \ins{App}(#1,\,#2)}

\newcommand{\redh}{\to_{\mach{c}}}
\newcommand{\reddh}{\twoheadrightarrow_{\mach{c}}}
\newcommand{\convh}{\leftrightarrow_\mach{c}}
\newcommand{\err}{\texttt{err}}
\newcommand{\reddd}{\redd_d}


 \renewcommand{\size}[1]{\vert{#1}\vert}

\volume{1}			


\begin{document}
\begin{frontmatter}
  \title{Extended Addressing Machines for PCF,\\ with Explicit Substitutions}  

  \author{Benedetto Intrigila\thanksref{intremail}}
  \address{Dipartimento di Ingegneria dell'Impresa,\\ University of Rome ``Tor Vergata'', Italy} 
  \thanks[intremail]{Email: 
  \href{mailto:benedetto.intrigila@uniroma2.it}{\texttt{\normalshape
        benedetto.intrigila@uniroma2.it}}}
  
\author{Giulio Manzonetto\thanksref{PPS}\thanksref{myemail}}
  \author{Nicolas M\"unnich\thanksref{Flavien}\thanksref{coemail}}
  \address{Univ. USPN, Sorbonne Paris Cit\'e,\\ LIPN, UMR 7030, CNRS, F-93430 Villetaneuse, France.}
  \thanks[PPS]{Partly supported by ANR Project PPS, ANR-19-CE48-0014.} 
  \thanks[Flavien]{Partly supported by ANR JCJC Project CoGITARe, ANR-18-CE25-0001.} 
  \thanks[myemail]{Email:
    \href{mailto:giulio.manzonetto@lipn.univ-paris13.fr} {\texttt{\normalshape
        giulio.manzonetto@lipn.univ-paris13.fr}}} \thanks[coemail]{Email:
    \href{mailto:munnich@lipn.univ-paris13.fr} {\texttt{\normalshape
        munnich@lipn.univ-paris13.fr}}}
\begin{abstract} 
 Addressing machines have been introduced as a formalism to construct models of the pure, untyped \lam-calculus.
We extend the syntax of their programs by adding instructions for executing arithmetic ope\-ra\-tions on natural numbers, and introduce a reflection principle allowing certain machines to access their own address and perform recursive calls.
We prove that the resulting extended addressing machines naturally model a weak call-by-name \PCF{} with explicit substitutions.
Finally, we show that they are also well-suited for representing regular \PCF{} programs (closed terms) computing natural numbers.
\end{abstract}
\begin{keyword}
Addressing machines,
PCF,
explicit substitutions,
computational model.
\end{keyword}
\end{frontmatter}
\renewcommand{\vec}{\pol}
\section*{Introduction}

Turing machines (TM) and \lam-calculus constitute two fundamental formalisms in theoretical computer science.
Because of the difficulty in emulating higher-order calculations on a TM, their equivalence on partial numeric functions is not obtained directly, but rather composing different encodings. 
As a consequence, no model of \lam-calculus (\emph{\lam-model}) based on TM's has arisen in the literature so far.
Recently, Della Penna et al.\ have successfully built a \lam-model based on so-called \emph{addressing machines} (AM) \cite{DellaPennaIM21}.
The intent is to propose a model of computation, alternative to von Neumann architecture, where computation is based on communication between machines rather than performing local operations.
In fact, these machines are solely capable of manipulating the addresses of other machines---this opens the way for modelling higher-order computations since functions can be passed via their addresses. 
An AM can read an address from its input-tape, store in a register the result of applying an address to another and, finally, pass the execution to another machine by calling its address (possibly extending its input-tape).
The set of instructions is deliberately small, to identify the minimal setting needed to represent \lam-terms.
The downside is that performing calculations on natural numbers is as awkward as using Church numerals in \lam-calculus.

{\bfseries Contents.} In this paper we extend the formalism of AM's with a set of instructions representing basic arithmetic operations and conditional tests on natural numbers.
As we are entering a world of machines and addresses, we need specific machines to represent numerals and assign them recognizable addresses.
Finally, in order to model recursion, we rely on the existence of machines representing {\em fixed point combinators}. 
These machines can be programmed in the original formalism but we can avoid any dependency on {\em self-application} by manipulating the addressing mechanism so that they have access to {\em their own address}. This can be seen as a very basic version of the reflection principle which is present in some programming languages. 
We call the resulting formalism \emph{extended addressing machines} (EAMs).

Considering these features, one might expect EAMs to be well-suited for simu\-la\-ting Plotkin's {\em Programming Computable Functions} (\PCF)~\cite{Plotkin77}, a simply typed \lam-calculus with constants, arithmetical operations, conditional testing and a fixed point combinator.
A \PCF{} term of the form $(\lam x.M)N$ can indeed be translated into a machine $\mach{M}$ reading as input ($x$) from its tape the address of~$\mach{N}$. 
As $\mach{M}$ has control over the computation, it naturally models a weak leftmost call-by-name evaluation.
However, while in the contractum $M\subst{x}{N}$ of the redex the substitution  is instantaneous, $\mach{M}$ needs to pass the address of $\mach{N}$ to the machines representing the subterms of $M$, with the substitution only being performed if $\mach{N}$ gains control of the computation.
As a result, rather than \PCF{}, EAMs naturally emulate the behavior of \EPCF---a weak call-by-name \PCF{} with explicit substitutions that are only performed ``on demand'', as in \cite{LevyM99}.
We endow EAMs with a typing mechanism based on simple types and define a type-preserving translation from well-typed \EPCF{} terms to EAMs.
Subsequently, we prove that also the operational behavior of \EPCF{} is faithfully represented by the translation.
Finally, by showing the equivalence between \PCF{} and \EPCF{} on terminating programs of type $\tint$, we are capable of drawing conclusions for the original language \PCF.

In this paper we mainly focus on the properties of the translation, but our long-term goal is to construct a sequential model of higher-order computations. The problem of finding a fully abstract model of \PCF{} was originally proposed by Robin Milner in \cite{Milner77} and is a difficult one.
A model is called \emph{fully abstract} (FA) whenever two programs sharing the type $\alpha$ get the same denotation in the model if and only if they are observationally indistinguishable when plugged in the same context $C[]$ of type $\alpha\to\tint$. 
Therefore, a FA model provides a semantic characterization of the observational equivalence of \PCF.
Quoting from~\cite{AbramskyJM00}:
\begin{quote} ``the problem is to understand what should be meant by a semantic characterization [\ldots] Our view is that the essential content of the problem, what makes it important, is that it calls for a semantic characterization of sequential, functional computation at higher-types''.
\end{quote}
A celebrated result is that FA models of \PCF{} can be obtained by defining suitable categories of games \cite{AbramskyMJ94,AbramskyJM00,HylandO00}.
Preliminary investigations show that EAMs open the way to construct a more `computational' FA model. 
E.g., in \cite{Milner77}, the model construction starts with first-order definable functions and requires---to cope with fixed point operator---the addition of extra `limit points' to ensure that the resulting partial order is direct complete. 
In the game semantics approach the fixed point operator is treated similarly, namely via its canonical interpretation in a cpo-enriched Cartesian closed category~\cite{AbramskyMJ94}.
 On the contrary, in our approach no limit construction is required to give the fixed point operator a meaning. 
 The fact that EAMs possess a given recursor having its own address stored inside is easily obtained from a mathematical point of view and, as argued above, can be seen as an abstract view of the usual implementation of recursion. 
 We believe this new point of view may increase our understanding of \PCF{} observational equivalence.

{\bfseries Outline.} The paper is organized as follows. In Section~\ref{sec:EPCF} we introduce the language \EPCF{} along with its syntax, simply typed assignment system and associated (call-by-name) big-step operational semantics. 
In Section~\ref{sec:EAMS} we define EAMs (no familiarity with \cite{DellaPennaIM21} is assumed) and introduce their operational semantics. In Section~\ref{sec:Types} we describe a type--checking algorithm for determining whether an EAM is well-typed.
In Section~\ref{sec:trans} we present our main results, namely: 
\bsub[(i)]\item the translation of a well-typed \EPCF{} term is an EAM typable with the same type (Theorem~\ref{thm:typabilitytransfers}); 
\item if an \EPCF{} term reduces to a value, then their translations as machines are interconvertible (Thm.~\ref{thm:equivalence}); 
\item the operational semantics of \PCF{} and \EPCF{} coincide on terminating programs of type $\tint$ (Thm.~\ref{thm:delayed_equivalence}); 
\item  the translation of a \PCF{} program computing a number is an EAM evaluating the corresponding numeral (Theorem~\ref{thm:main}).
\esub

{\bf Related works.} A preliminary version of AMs was introduced in Della Penna's MSc thesis \cite{DellaPennaTh} in order to model computation as communication between distinguished processes by means of their addresses. 
They were subsequently refined in~\cite{DellaPennaIM21} with the theoretical purpose of constructing a model of \lam-calculus.
Similarly, our paper should be seen as a first step towards the construction of a denotational model of~\PCF. 
Thus, the natural comparison is\footnote{The reader interested in a comparison with other abstract machines or formalisms is invited to consult \cite{IMM22}.
} with other models rather than other machine-based formalisms that have been proposed in the literature (e.g., call-by-name: SECD \cite{Landin64}, KAM \cite{Krivine07}, call-by-need: TIM~\cite{FairbairnW87}, Lazy KAM~\cite{Cregu91,Lang07}); call-by-value: ZINC~\cite{Leroy90}) from which they differ at an implementational level.

Compared with models of \PCF{} based on Scott-continuous functions~\cite{Milner77,BerryCL85,Curien07}, EAMs provide a more operational interpretation of a program and naturally avoid parallel features that would lead to the failure of FA as in the continuous semantics. 
Compared with Curien's sequential algorithms \cite{Curien92} and categories of games~\cite{AbramskyMJ94,HylandO00} they share the intensionality of programs' denotations, while presenting an original way of modelling sequential computation.
The model based on AMs also bares \emph{some} similarities with the categories of assembly used to model PCF \cite{LongleyTh}, mostly on a philosophical level, in the sense that these models are based on the `codes' (rather than addresses) of recursive functions realizing a formula ($\cong$ type). 

Concerning explicit substitutions we refer to the pioneering articles \cite{AbadiCCL90,AbadiCCL91,CurienHL96,LevyM99}.
Explicit substitutions have been barely considered in the context of \PCF---with the notable exception of \cite{SeamanI96}.

\section{Preliminaries}\label{sec:EPCF}
\begin{figure*}[t!]
\begin{minipage}{\textwidth}
\[\bar{c}
\bar{ccc}
\infer[{{\valrule}}]{\substseq{\sigma}\num n \reddd \num n }{ \num n\in \nat}
	&
	\infer[{{\funrule}}]{\substseq{\sigma}{\lam x.\esubst{M}{\rho}} \reddd \lam x.\esubst{M}{\sigma + \rho} }{}
	&
	\infer[\varrule]{\substseq{\sigma}x\reddd V}{\sigma(x) = (\rho, N) & \substseq{\rho}N\reddd V}
	\\[1ex]
	\infer[\fixrule]{\substseq \sigma {\fix M} \reddd V}{\substseq{\sigma}{ M \cdot (\fix M)} \reddd V}
	&
	\infer[\ifzzrule]{\substseq\sigma\ifterm M{N_1}{N_2} \reddd  V_1}{\substseq\sigma M\reddd \num{0} & \substseq{\sigma}{N_1}\reddd V_1 }
	&
	\infer[\ifzrule]{\substseq\sigma\ifterm M{N_1}{N_2} \reddd  V_2}{\substseq\sigma M\reddd \num{n+1} & \substseq{\sigma}{N_2}\reddd  V_2}
	\\[1ex]
	\infer[\predrule]{\substseq{\sigma}{ \pred M}  \reddd \num n}{\substseq{\sigma}{ M}  \reddd \num{n+1}}
	&
	\infer[\predzrule]{\substseq{\sigma}{ \pred M} \reddd \num {0}}{\substseq{\sigma}{ M}  \reddd \num{0}}	
	&
	\infer[\succrule]{\substseq{\sigma}{ \succ M}  \reddd  \num {n+1}}{\substseq{\sigma}{M}  \reddd \num{n}}
	\ear
	\\[10ex]
	\infer[\betarule]{\substseq\sigma{ M \cdot N} \reddd V }{\substseq\sigma M\reddd \esubst{\lam x.M'}{\rho} &\substseq{\rho + [ x \leftarrow (\sigma, N)]}{ M'}\reddd V}
\ear
\]
\end{minipage}
\caption{The big-step operational semantics of \EPCF.}\label{fig:PCFheadDelay}
\end{figure*}

The paradigmatic programming language \PCF~\cite{Plotkin77} is a simply typed \lam-calculus enriched with constants representing natural numbers, the fundamental arithmetical operations, an if-then-else conditional instruction, and a fixed-point operator. 
We give \PCF{} for granted and rather present \EPCF, an extension of \PCF{} with explicit substitutions~\cite{LevyM99}. 
We draw conclusions for the standard \PCF{} by exploiting the fact that they are equivalent on programs (closed terms) of type $\tint$.

\begin{definition}\label{def:EPCF} Consider fixed a countably infinite set $\Var$ of variables.
\EPCF{} \emph{terms} and \emph{explicit substitutions} are defined by (for $n\ge 0$ and $\vec x\in\Var$):
	\[
	\bar{lcl}
		L,M,N&::=&~\,x\mid M\cdot N \mid \lam x.\esubst{M}{\sigma}
		\mid \mathbf{0} \mid \pred M \mid \succ M\mid \ifterm L{M}{N} \mid \fix M\\
		\sigma,\rho &::=&\left[x_1\leftarrow(\sigma_1,M_1),\dots,x_n\leftarrow(\sigma_n,M_n)\right]
	\ear
	\]
\end{definition}

As is customary, $M \cdot N$ stands for the \emph{application} of a term $M$ to its argument $N$, $\mathbf{0}$ represents the natural number 0, $\mathbf{pred}$ and $\mathbf{succ}$ indicate the predecessor and successor respectively, $\mathbf{ifz}$ is the conditional test on zero, and finally, $\mathbf{fix}$ is a fixed-point operator. We assume that application -- often denoted as juxtaposition -- associates to the left and has higher precedence than abstraction. 
Concerning $\lam x.\esubst{M}{\sigma}$, it represents an \emph{abstraction} where $\sigma$ is a list of assignments from variables to \emph{closures} (terms with the associated substitutions), where each variable can only have one closure assigned to it. 

\begin{definition}
\bsub\item
In an explicit substitution \[\sigma = [x_1 \leftarrow (\sigma_1, M_1),\dots,x_n \leftarrow (\sigma_n,M_n)]\] the $x_i$'s are assumed to be fresh and distinguished. 
\item By (i), we can define $\sigma(x_i) = (\sigma_i, M_i)$.
\item
The \emph{domain} of $\sigma$ is given by $\dom(\sigma)=\set{x_1,\dots,x_n}$.
\item
We write $\sigma+\rho$ for the concatenation of $\sigma$ and $\rho$, and in this case we assume $\dom(\sigma)\cap\dom(\rho) = \emptyset$.
\esub
\end{definition}
The set $\FV{M}$ of \emph{free variables} of an \EPCF{} term $M$ is defined as usual, except for the abstraction case $\FV{\lam x.\esubst{M}{\sigma}} = \FV{M}-(\set{x}\cup\dom(\sigma))$.
The term $M$ is \emph{closed} if $\FV{M} = \emptyset$, and in that case it is called an \emph{\EPCF{} program}.

Hereafter terms are considered up to renaming of bound variables. Therefore the symbol $=$ will denote syntactic equality up to \emph{$\alpha$-conversion}.

\begin{notation}
\bsub
\item For every $n\in\nat$, we let $\num n = \succ^n(\mathbf{0})$. In particular, $\num 0$ is an alternative notation for $\mathbf{0}$.
\item 	As a syntactic sugar, we write $\lam x.M$ for $\lam x.\esubst{M}{}$. With this notation in place, \emph{\PCF~terms} are simply \EPCF{} terms containing empty explicit substitutions.

\item 	
	For $n\in\nat$, we often write $\lam x_1\dots \lam x_n.M$ as $\lam x_1\dots x_n.M$, or even $\lam\vec x.M$ when $n$ is clear from the context.
	Summing up, and recalling that $\cdot$ is left associative, $\lam x_1x_2x_3.L\cdot M\cdot N$ stands for  $\lam x_1.\esubst{(\lam x_2.\esubst{(\lam x_3.\esubst{((L\cdot M)\cdot N)}{})}{})}{}$.
\item $M\subst{x}{N}$ denotes the capture-free substitution of $N$ for all free occurrences of $x$ in $M$.
\esub
\end{notation}

\begin{example}\label{ex:PCFterms} 
We introduce some notations for the following ({\sf E})\PCF{} programs, that will be used as running examples.
\bsub
\item $\comb{I} = \lam x.x$, representing the identity.
\item $\Om = \fix(\comb{I})$ representing the paradigmatic looping program.
\item $\mathbf{succ1} = \lam x.\succ (x)$, representing the successor function.
\item $\mathbf{succ2}=(\lam sn.s\cdot(s\cdot n))\cdot \mathbf{succ1}$, representing the function $f(x) = x+2$.
\item $\mathbf{add\_aux} = \lam f x y.\ifterm y {x} {(f \cdot(\succ x))\cdot (\pred y)}$, i.e.\ the functional
\[
	\Phi_f(x,y) = \begin{cases}
		x,&\textrm{if }y = 0,\\
		f(x+1,y-1),&\textrm{if }y > 0.\\
		\end{cases}
\]
\item $\mathbf{add} = \fix (\mathbf{add\_aux})$, i.e., the recursive definition of addition $f(x,y) = x+y$.
\esub
\end{example}

The operational semantics of \EPCF{} is defined via a call-by-name big-step (leftmost) weak reduction.

\begin{definition}\bsub
\item We let $\Val = \set{ \num n\st n\in\nat} \cup \set{\lam x.\esubst{M}{\sigma} \st M \textrm{ is an \EPCF{} term}}$ be the set of \EPCF{} \emph{values}.
\item 
The \emph{big-step weak reduction} is the least relation $\reddd$ from \EPCF{} terms to $\Val$, closed under the rules of Figure~\ref{fig:PCFheadDelay}.
\item We say that an \EPCF{} program $M$ is \emph{terminating} whenever $M\reddd V$ holds, for some $V\in\Val$.
Otherwise, we say that $M$ is a \emph{non-terminating}, or \emph{looping}, term.
\esub
\end{definition}

\begin{example} We show some of the terms from Example~\ref{ex:PCFterms}, at work.
\bsub
\item We have $\substseq{[\,]}{\mathbf{succ1}\cdot \mathbf{0}\reddd \num 1}$. 
To get the reader familiar with the operational semantics, we give the details:
\[
\infer[\betarule]{
	\substseq{[\,]}{(\lam x.\succ(x)) \cdot \num{0}} \reddd \num 1
}{
	\infer[\funrule]{
		\substseq{[\,]}{\lam x.\succ (x)}\reddd \lam x.\succ (x)
	}{
	}
	& \infer[\succrule]{
		\substseq{[x \leftarrow ([\,], \num{0})]}{\succ (x)} \reddd \num 1
	}{
		\infer[\varrule]{
			\substseq{[x \leftarrow ([\,], \num{0})]}{x} \reddd \num{0}
		}{
			\infer[\valrule]{\substseq{[\,]}{\num{0}} \reddd \num{0}}{}
		}
	}
}
\]
\item Similarly, $\substseq{[\,]}{\comb{I}\cdot \num 4} \reddd \num 4$, $\ \substseq{[\,]}{\comb{I\cdot I}} \reddd \comb{I}$, $\ \substseq{[\,]}{\mathbf{succ2}\cdot \num 1} \reddd \num 3$ and $\substseq{[\,]}{\mathbf{add}\cdot \num 5\cdot \num 1} \reddd \num 6$.
\item Since $\Om$ is looping, there is no $V\in\Val$ such that $\substseq{[\,]}{\Om}\reddd V$ is derivable.
\esub
\end{example}

We now endow \EPCF{} terms with a type system based on simple types.

\begin{definition}\label{def:simpletypes}\bsub
\item\label{def:simpletypes1} The set $\Types$ of \emph{(simple) types} over a \emph{ground type} $\tint$ is inductively defined by the grammar:
\begin{equation}\tag{$\Types$}
	\alpha,\beta\ ::=\ \tint \mid \alpha\to \beta
\end{equation}
The arrow associates to the right, in other words we write $\alpha_1\to\cdots\to\alpha_n\to\beta$ for \\$\alpha_1\to(\cdots\to(\alpha_n\to\beta)\cdots)$ ($= \vec\alpha\to\beta$, for short).
\item\label{def:simpletypes2} A \emph{typing context} $\Gamma$ is given by a set of associations between variables and types, written $x_1:\alpha_1,\dots,x_n : \alpha_n$.
In this case, we let $\dom(\Gamma) = \set{x_1,\dots,x_n}$.
When writing $\Gamma,x : \alpha$, we silently assume that $x\notin\dom(\Gamma)$.
\item \emph{Typing judgements} are triples, denoted $\Gamma\vdash^\E M : \alpha$, where $\Gamma$ is a typing context, $M$ is an \EPCF{} term and $\alpha\in\Types$.
\item \emph{Typing derivations} are finite trees built bottom-up in such a way that the root has shape $\Gamma\vdash^\E M : \alpha$ and every node is an instance of a rule from Figure~\ref{fig:typing}. In the rule $(\to_\mathrm{I})$ we assume wlog that $x\notin\Gamma$, by $\alpha$-conversion.
We also use an auxiliary predicate $\sigma \models \Gamma$ whose intuitive meaning is that $\Gamma$ is a typing context constructed from an explicit substitution $\sigma$.
\item 
When writing $\Gamma\vdash^\E M : \alpha$, we mean that this typing judgement is derivable.
\item We say that $M$ \emph{is typable} if $\Gamma\vdash^\E M : \alpha$ is derivable for some $\Gamma,\alpha$.
\esub
\end{definition}

\begin{figure*}[t]
$\arraycolsep=4pt\def\arraystretch{2.2}
\bar{c}
\bar{cccc}
\infer[(\mathrm{ax})]{\Gamma,x:\alpha\vdash^\E x:\alpha}{}
&
\infer[(0)]{\Gamma\vdash^\E\mathbf{0}:\tint}{}
&
\infer[(\mathrm{Y})]{\Gamma\vdash^\E \fix M:\alpha}{\Gamma\vdash^\E M : \alpha\rightarrow \alpha}
\\[1ex]
\infer[(+)]{\Gamma\vdash^\E\succ M:\tint}{\Gamma\vdash^\E M:\tint}
&
\infer[(-)]{\Gamma\vdash^\E\pred M:\tint}{\Gamma\vdash^\E M:\tint}
&
\infer[(\to_\mathrm{I})]{\Gamma\vdash^\E\lambda x.\esubst{M}{\sigma}:\alpha\rightarrow \beta}{\sigma \models \Delta & \Gamma,\Delta,x:\alpha\vdash^\E M : \beta}
\\[1ex]
\bar{ccc}
\infer[(\to_\mathrm{E})]{\Gamma\vdash^\E M\cdot N:\beta}{\Gamma\vdash^\E M:\alpha\rightarrow \beta &\Gamma\vdash^\E N:\alpha}
\ear
&&
\infer[(\mathrm{ifz})]{\Gamma\vdash^\E \ifterm LMN:\alpha}{
	\Gamma\vdash^\E L:\tint
	&
	\Gamma\vdash^\E M:\alpha
	&
	\Gamma\vdash^\E N:\alpha
}\\
\infer[(\sigma_0)]{[] \models \emptyset}{}
&\quad &
\infer[(\sigma)]{\sigma+[x\leftarrow(\rho, M)] \models \Gamma, x: \alpha}{\sigma \models \Gamma & \rho \models \Delta & \Delta \vdash^\E M : \alpha}
\ear
\ear
$
\caption{\EPCF{} type assignment system.}\label{fig:typing}
\end{figure*}

\begin{example} The following are examples of derivable typing judgments.
\bsub
\item {\footnotesize\[\begin{aligned}
\infer[(\to_\mathrm{I})]{
	\vdash^\E\lam x.\esubst{\succ (y)}{[y\leftarrow([\,], \mathbf{0})]}:\alpha\rightarrow \tint 
}{
	\infer[(\sigma)]{
		[y\leftarrow([\,], \mathbf{0})] \models y :\tint
	}{
		\infer[(\sigma_0)]{
			[\,]\models\emptyset
		}{
		}
		&
		\infer[(\sigma_0)]{
			[\,]\models\emptyset
		}{
		}
		& 
		\infer[(0)]{
			\vdash^\E \mathbf{0}:\tint
		}{
		}
	}
	\infer[(+)]{
		y:\tint,x:\alpha\vdash^\E\succ (y):\tint
	}{
	\infer[(\mathrm{ax})]{y:\tint,x:\alpha\vdash^\E y :\tint}{}
	}
}
\end{aligned}\]}
\item $\vdash^\E (\lam x.\succ (x))\cdot\mathbf{0}:\tint$.
\item $\vdash^\E (\lam sn.s\cdot (s\cdot n))\cdot(\lam x.\succ (x)):\tint\rightarrow\tint$.
\item $\vdash^\E \fix (\lam f x y.\ifterm y {x} {f \cdot(\succ x)\cdot (\pred y)}):\tint\rightarrow\tint\rightarrow\tint$.
\item $\vdash^\E \Om : \alpha$, for all $\alpha\in\Types$.
\esub
\end{example}

The following lemma summarizes the main (rather standard) properties of the language \EPCF.
\begin{lemma}\label{lem:epcftyping} Let $M$ be an \EPCF{} term, $V\in \Val$, $\alpha,\beta\in\Types$ and $\Gamma$ be a context.
\bsub
\item \label{lem:epcftyping1} (Syntax directedness) Every derivable judgement $\Gamma\vdash^\E M : \alpha$ admits a unique derivation.
\item \label{lem:epcftyping2} (Strengthening) If $\Gamma,x:\beta\vdash^\E M : \alpha$ and $x\notin\FV{M}$ then $\Gamma\vdash^\E M : \alpha$.
\item \label{lem:epcftyping4} (Subject reduction) For $M$ closed, $\vdash^\E M : \alpha$ and $\substseq{[\,]}{M \reddd V}$ entail $\vdash^\E V : \alpha$.
\esub
\end{lemma}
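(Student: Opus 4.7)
The plan is to handle the three items in sequence. Parts (i) and (ii) are routine syntactic inductions; the real work is in (iii), which demands a strengthening of the statement so as to track the explicit substitutions.

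For (i), I proceed by a mutual induction on the structures of $M$ and $\sigma$. Inspection of Figure~\ref{fig:typing} shows that the outermost constructor of $M$ uniquely determines the last rule of any derivation of $\Gamma\vdash^\E M:\alpha$, and that the shape of $\sigma$ uniquely determines whether $\sigma\models\Delta$ ends with $(\sigma_0)$ or with $(\sigma)$. Inverting the selected rule produces uniquely determined subderivations, to which the induction hypothesis applies. Part (ii) is a standard induction on the derivation of $\Gamma,x:\beta\vdash^\E M:\alpha$: since $x\notin\FV{M}$, the axiom rule cannot invoke $x$, and in every other rule each premise concerns a subterm in which $x$ is not free either (using the inductive definition of $\FV{\cdot}$, in particular $\FV{\lam y.\esubst{M'}{\sigma}}=\FV{M'}\setminus(\{y\}\cup\dom(\sigma))$), so one simply rebuilds the same derivation with $\Gamma$ in place of $\Gamma,x:\beta$.

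For (iii), since the $\betarule$ rule of $\reddd$ grows the substitution with $[x\leftarrow(\sigma,N)]$, the property must be strengthened to: whenever $\sigma\models\Gamma$, $\Gamma\vdash^\E M:\alpha$ and $\substseq{\sigma}{M}\reddd V$ are derivable, then $\vdash^\E V:\alpha$. The original statement is recovered by choosing $\sigma=[]$ and $\Gamma=\emptyset$. I would then argue by induction on the derivation of $\substseq{\sigma}{M}\reddd V$. The rules $\valrule$, $\succrule$, $\predrule$ and $\predzrule$ produce numerals typable at $\tint$ via $(0)$ and repeated $(+)$. Case $\varrule$ uses the syntax-directed inversion of $\sigma\models\Gamma$ from (i) to extract $\rho\models\Delta$ and $\Delta\vdash^\E N:\alpha$ from $\sigma(x)=(\rho,N)$, whence the IH applies. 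Case $\funrule$ requires an auxiliary concatenation lemma---if $\sigma\models\Gamma$ and $\rho\models\Delta$ with $\dom(\sigma)\cap\dom(\rho)=\emptyset$, then $\sigma+\rho\models\Gamma,\Delta$, proved by induction on $\rho$---combined with $(\to_\mathrm{I})$ to retype the value $\lam x.\esubst{M}{\sigma+\rho}$. The cases $\fixrule$, $\ifzzrule$ and $\ifzrule$ are direct applications of the IH after inverting $(\mathrm{Y})$ or $(\mathrm{ifz})$.

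The main obstacle will be the $\betarule$ case. Applying the IH to the first premise gives $\vdash^\E\lam x.\esubst{M'}{\rho}:\gamma\to\alpha$; inverting $(\to_\mathrm{I})$ yields some $\Delta$ with $\rho\models\Delta$ and $\Delta,x:\gamma\vdash^\E M':\alpha$. One must then assemble a derivation of $\rho+[x\leftarrow(\sigma,N)]\models\Delta,x:\gamma$ by a single application of $(\sigma)$, using $\rho\models\Delta$, the hypothesis $\sigma\models\Gamma$, and the fact that $\Gamma\vdash^\E N:\gamma$ obtained by inverting $(\to_\mathrm{E})$ on the typing of $M'\cdot N$. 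This is essentially the same move as the usual substitution lemma in a $\beta$-reduction argument, but performed on the metalevel closure-substitutions; the freshness and disjointness side-conditions on $+$ (silently enforced by $\alpha$-conversion) are what make the bookkeeping heavier than in a standard call-by-name proof.
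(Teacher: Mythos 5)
Your proposal is correct and follows essentially the same route as the paper: items (i) and (ii) are dispatched as routine inductions, and for (iii) the statement is strengthened to the form ``$\sigma\models\Gamma$, $\Gamma\vdash^\E M:\alpha$ and $\substseq{\sigma}{M}\reddd V$ imply $\vdash^\E V:\alpha$'', proved by induction on the reduction derivation with the same key steps (concatenation $\sigma+\rho\models\Gamma,\Delta$ in the $\funrule$ case, and extending $\rho$ with $[x\leftarrow(\sigma,N)]$ via rule $(\sigma)$ in the $\betarule$ case). The only blemish is a notational slip in the $\betarule$ case where $M'$ is used both for the left component of the application and for the body of the resulting abstraction.
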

It follows that, if an \EPCF{} program $M$ is typable, then it is also typable in the empty context.
\section{Extended Addressing Machines}\label{sec:EAMS}
We extend the addressing machines from \cite{DellaPennaIM21} with instructions for performing arithmetic operations and conditional testing.
Natural numbers are represented by particular machines playing the role of numerals.

\subsection{Main definitions}

We consider fixed a countably infinite set $\Addrs$ of \emph{addresses} together with a distinguished countable subset $\mathbb{X} \subset \Addrs$, such that $\Addrs-\mathbb{X}$ remains infinite. 
Intuitively, $\mathbb{X}$ is the set of addresses that we reserve for the numerals, therefore hereafter we work under the hypothesis that $\mathbb{X} = \nnat[]$, an assumption that we can make without loss of generality.

Let $\Null\notin\Addrs$ be a ``null'' constant corresponding to an uninitialised register. 
Set $\Addrs_\Null = \Addrs\cup\set{\Null}$.

\begin{definition} 
\bsub
\item 
	An \emph{$\Addrs$-valued tape} $T$ is a finite ordered list of addresses $T = [a_1,\dots,a_n]$ with $a_i\in\Addrs$ for all $i\,(1\le i \le n)$. When $\Addrs$ is clear from the context, we simply call $T$ a tape. We denote by $\Tapes$ the set of all $\Addrs$-valued tapes.	
\item
	Let $a\in\Addrs$ and $T,T'\in\Tapes$. We denote by $\Cons a {T}$ the tape having $a$ as first element and $T$ as tail. We write $\appT{T}{T'}$ for the concatenation of $T$ and $T'$, which is an $\Addrs$-valued tape itself.

\item
	Given an index $i \ge 0$, an $\Addrs_\Null$-valued \emph{register} $R_i$ is a memory-cell capable of storing either $\Null$ or an address $a\in\Addrs$. We write $\val{R_i}$ to represent the value stored in the register $R_i$. (The notation $\oc R_i$ is borrowed from ML, where $\oc$ represents an explicit dereferencing operator.)
\item
	Given $\Addrs_\Null$-valued registers $R_0,\dots,R_{n}$ for $n\ge 0$, an address $a\in\Addrs$ and an index $i\ge 0$, we write $\vec R\repl{R_i}{a}$ for the list of registers $\vec R$ where the value of $R_i$ has been updated by setting $\val{R_i} = a$.
Notice that, whenever $i > n$, we assume that the contents of $\vec R$ remains unchanged, i.e.\ $\vec R\repl{R_i}{a} = \vec R$.
\esub
\end{definition}
Intuitively, the contents of the registers $R_0,\dots,R_n$ constitutes the \emph{state} of a machine, while the tape correspond to the list of its inputs.
The addressing machines from \cite{DellaPennaIM21} are endowed with only three instructions ($i,j,k,l$ range over indices of registers):
\begin{enumerate}[1.]
\item $\Load i$ : reads an address $a$ from the input tape, assuming it is non-empty, and stores $a$ in the register $R_i$.
If the tape is empty then the machine suspends its execution without raising an error.
\item $\Apply ijk$ : reads the addresses $a_1,a_2$ from $R_i$ and $R_j$ respectively, and stores in $R_k$ the address of the machine obtained by extending the tape of the machine of address $a_1$ with the address $a_2$. The resulting address is not calculated internally but rather obtained calling an external \emph{application map}.
\item $\Call i$ : transfers the computation to the machine having as address the value stored in $R_i$, whose tape is extended with the remainder of the current machine's tape.
\end{enumerate}
As a general principle, writing on a non-existing register does not cause issues as the value is simply discarded---this is in fact the way one can erase an argument. 
The attempt of reading an uninitialized register would raise an error---we however show that these kind of errors can be avoided statically (see Lemma~\ref{lem:correction}).

We enrich the above set of instructions with arithmetic operations mimicking the ones present in \PCF:
\begin{enumerate}[1.,resume]
\item $\Ifz i j k l$: implements the ``\emph{is zero?}'' test on $!R_i$. Assuming that the value of $R_i$ is an address $n\in\nat$, the instruction stores in $R_l$ the value of $R_j$ or $R_k$, depending on whether $n = 0$.
\item $\Pred i j$: if $\val{R_i} \in\nat$, the value of $R_j$ becomes $\val{R_i}\ominus1 = \max(\val{R_i} - 1, 0)$.
\item $\Succ i j$: if $\val{R_i} \in\nat$, then the value of $R_j$ becomes $\val{R_i}+1$.
\end{enumerate}
Notice that the instructions above need $R_i$ to contain a natural number to perform the corresponding operation. However, they are also supposed to work on addresses of machines that compute a numeral. 
For this reason, the machine whose address is stored in $R_i$ must first be executed, and only if the computation terminates with a numeral is the arithmetic operation performed. 
Clearly, if the computation terminates in an address not representing a numeral, then an error should be raised at execution time. We will see that these kind of errors can be avoided using a type inference algorithm (see Proposition~\ref{prop:typing}, below).

\begin{definition}\label{def:progs}
\bsub
\item \label{def:progs1}
	A \emph{program} $P$ is a finite list of instructions generated by the following grammar, where $\varepsilon$ represents the empty string and $i,j,k,l$ are indices of registers:
	\[
	\begin{array}{lcl}
	\ins{P}&::=&\Load i;\, \ins{P}\mid \ins{A}\\
	\ins{A}&::=&\Apply ijk;\, \ins{A}\mid \Ifz ijkl;\, \ins{A}\mid
	\Pred ij;\, \ins{A}\mid\Succ ij;\, \ins{A}\mid\ins{C}\\
	\ins{C}&::=&\Call i \mid \varepsilon
	\end{array}
	\]
	Thus, a program starts with a list of $\ins{Load}$'s, continues with a list of $\ins{App}$, $\ins{Test}$, $\ins{Pred}$, $\ins{Succ}$, and possibly ends with a $\ins{Call}$. Each of these lists may be empty, in particular the empty program $\varepsilon$ can be generated.
	\item In a program, we write $\Load (i_1, \dots, i_n)$ as an abbreviation for the instructions $\Load i_1;\,\cdots;\,\Load i_n$. 

\item\label{def:progs2}
	Let $P$ be a program, $r\ge 0$, and $\cI\subseteq \set{0,\dots,r-1}$ be a set of indices corresponding to the indices of initialized registers. 
	Define the relation $\cI\models^{r} P$, whose intent is to specify that $P$ does not read uninitialized registers, as the least relation closed under the rules:
\[
	\bar{cccc}
		\infer{\cI\models^{r}\varepsilon}{}
		&
		\infer{\cI\models^{r}\Call i}{i\in \cI}
		&
		\infer{\cI\models^{r} \Pred ij;\, \ins{A}}{\cI\cup\set{j}\models^{r}  \ins{A} & i\in \cI& j<r}
		\\[1ex]
		\infer{\cI\models^{r} \Load i;\, \ins{P}}{\cI\cup\set{i}\models^{r}  \ins{P} & i< r}		
		&
		\infer{\cI\models^{r} \Load i;\, \ins{P}}{\cI\models^{r}  \ins{P} & i\ge r}\qquad
		&
	\infer{\cI\models^{r} \Succ ij;\, \ins{A}}{\cI\cup\set{j}\models^{r}  \ins{A} & i\in \cI& j<r}
		\\[1ex]
	\infer{\cI\models^{r} \Apply ijk;\, \ins{A}}{\cI\cup\set{k}\models^{r}  \ins{A} & i,j\in \cI& k<r}	
	&
	\multicolumn{2}{c}{		
	\infer{\cI\models^{r} \Ifz ijkl;\, \ins{A}}{\cI\cup\set{l}\models^{r}  \ins{A} & i,j,k\in \cI& l<r}}
	\ear
\]
\item\label{def:progs3} 
	A program $P$ is \emph{valid with respect to $R_0,\dots,R_{r-1}$} if $\cR\models^{r} P$ holds for
	$\cR = \set {i\st R_i \neq\Null \und 0\le i < r}$.
\esub
\end{definition}

\begin{example} For each of these programs, we specify its validity with respect to $R_0 = 7, R_1 = a, R_2 = \Null$ (i.e., $r = 3$).
\[
\bar{lcr}
P_1 = \Pred 02;\,\Call 2&&\textrm{(valid)}\\
P_2 = \Load (2,8);\,\Ifz 0120;\, \Call 0&&\textrm{(valid)}\\
P_3 = \Load (0, 2,8);\, \Call 8&\qquad&\textrm{(calling the uninitialized register $R_8$, thus not valid)}
\ear
\]
\end{example}

\begin{lemma}\label{lem:correction}
Given $\Addrs_\Null$-valued registers $\vec R$ and a program $P$ it is decidable whether $P$ is valid w.r.t.\ $\vec R$.
\end{lemma}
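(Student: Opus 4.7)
The plan is to exhibit a simple recursive procedure that decides $\cR \models^{r} P$, where $\cR$ is computed from $\vec R$ in the obvious way.

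First I would observe that the relation $\cI \models^{r} P$ is defined by a \emph{syntax-directed} inductive system: in every rule of Definition~\ref{def:progs}(\ref{def:progs2}) the shape of the first instruction of $P$ uniquely determines which rule (if any) can be applied, and the premise always concerns the strict suffix of $P$ obtained by removing that head instruction. Since $P$ is a finite list of instructions, any attempted derivation has depth bounded by $\size{P}$, so termination of the recursion is immediate.

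Next I would describe the algorithm explicitly. Given $\vec R = R_0,\dots,R_{r-1}$, compute $\cR = \set{i \st R_i\neq\Null \und 0\le i<r}$; this is a finite subset of $\set{0,\dots,r-1}$ and is trivially computable. Then traverse $P$ left-to-right, maintaining a current set $\cI$ of initialized-register indices (starting with $\cI := \cR$), and at each head instruction perform the finitely many membership / ordering checks prescribed by the corresponding rule:
\[
\bar{ll}
\Load i: & \text{if } i<r \text{ set } \cI := \cI\cup\set{i}; \text{ else leave } \cI \text{ unchanged};\\
\Pred ij, \Succ ij: & \text{check } i\in\cI \text{ and } j<r, \text{ then } \cI := \cI\cup\set{j};\\
\Apply ijk: & \text{check } i,j\in\cI \text{ and } k<r, \text{ then } \cI := \cI\cup\set{k};\\
\Ifz ijkl: & \text{check } i,j,k\in\cI \text{ and } l<r, \text{ then } \cI := \cI\cup\set{l};\\
\Call i: & \text{check } i\in\cI \text{ and stop (accept)};\\
\varepsilon: & \text{accept}.
\ear
\]
If any check fails, reject; if the traversal reaches the end of $P$ without rejecting, accept. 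Every individual check is a decidable predicate on natural numbers and finite sets of naturals, and there are $\size{P}$ traversal steps, so the whole procedure terminates.

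Finally I would argue correctness by a straightforward induction on $\size{P}$: by syntax-directedness, the accepting run of the algorithm on $(\cI, P)$ coincides with the (unique) candidate derivation of $\cI \models^{r} P$, and conversely any derivation of $\cI \models^{r} P$ is precisely mirrored by the algorithm's successful traversal. There is no real obstacle here; the only thing one has to be careful about is that the $\Load i$ case with $i\ge r$ does not add anything to $\cI$ (the argument is silently discarded), which is exactly matched by the two distinct $\Load$ rules in Definition~\ref{def:progs}(\ref{def:progs2}).
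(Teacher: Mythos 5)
Your proposal is correct and takes essentially the same approach as the paper, which likewise derives decidability from the syntax-directedness of Definition~\ref{def:progs}(\ref{def:progs2}) together with the invariant $\cI\subseteq\set{0,\dots,r-1}$; you have merely spelled out the left-to-right traversal that the paper's one-line argument alludes to.
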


\begin{proof} Decidability follows from the syntax directedness of Definition~\ref{def:progs}\eqref{def:progs2}, and the preservation of the invariant $\cI\subseteq\set{0,\dots,r-1}$, since $\cI$ is only extended with $k<r$.
\end{proof}

\begin{definition}\label{def:AM}
\bsub
\item
	An \emph{extended addressing machine} (\emph{EAM}) $\mM$ with $r$ registers over $\Addrs$ is given by a tuple:
\[
	\mM = \tuple{R_0,\dots,R_{r-1},P,T}
\] 
where $\vec R$ are $\Addrs_\Null$-valued registers, $P$ is a program valid w.r.t.\ $\vec R$ and $T\in\Tapes$ is an (input) tape.
\item
	We write $\mM.r$ for the number of registers of $\mM$, $\mM.R_i$ for its $i$-th register, $\mM.P$ for the associated program and $\mM.T$ for its input tape.
	When writing ``$R_i = a$'' in a tuple we indicate that $R_i$ is present and $\val{R_i} = a$.
\item
	We say that an extended addressing machine $\mM$ as above is \emph{stuck}, written $\stuck{\mM}$, whenever its program has shape $\mM.P = \Load i;P$ but its input-tape is empty $\mM.T = []$. Otherwise $\mM$ is \emph{ready}, written $\lnot\stuck{\mM}$.
\item
	The set of all extended addressing machines over $\Addrs$ will be denoted by $\cM$.
\item For $n\ge 0$, the \emph{$n$-th numeral machine} is defined $\mach{n}= \tuple{R_0,\varepsilon,[]}$ with $\val{R_0} = n$.
\item\label{def:AM6} For $n\ge 0$ and $a\in\Addrs$, define 
\[
	\mYn{n}^{a} = \tuple{(R_0 = a,R_1=\Null,\dots,R_{n+1} = \Null,P,[]}
\] 
where$\bar[t]{rl}
P =&  \Load (1,\dots,{n+1});\Apply  0 10;\cdots;
	 \Apply 0{n+1}0;\Apply  1 21;\cdots; \\
	&\Apply 1{n+1}1;\Apply 101; \Call 1
\ear
$
\esub
\end{definition} 

We now enter into the details of the addressing mechanism which constitutes the core of this formalism.

\begin{definition} \label{def:bijectivelookup+Y} Recall that $\nat$ stands for an infinite subset of $\Addrs$, here identified with the set of natural numbers, and $\mYn{n}^a$ has been introduced in Definition~\ref{def:AM}(\ref{def:AM6}).
\begin{enumerate}
\item Since $\cM$ is countable, we can fix a bijective function $\Lookup : \cM \to  \Addrs$ satisfying the following conditions:
	\bsub
	\item (Numerals)
	$\forall n \in \mathbb{N}\,.\, \Lookup{\mach{n}} = n$, where $\mach{n}$ is the $n$-th numeral machine;
\item 
	(Fixed point combinator) for all $n\ge 0$, there exists an address $a\in\Addrs-\nnat[]$ such that $\Lookup(\mYn{n}^{a}) = a$.
\esub
We say that the bijection $\Lookup(\cdot)$ is an \emph{address table map} and call the element $\Lookup \mM$ the {\em address of the EAM $\mM$}.
We simply write $\mYn{n}$ for the machine satisfying the equation above and $a_{\mYn{n}}$ for its address, i.e.~$\Lookup(\mYn{n}) = a_{\mYn{n}}$.
\item
	For $a\in\Addrs$, we write $\Lookinv{a}$ for the unique machine having address $a$, i.e., 
	$\Lookinv{a} = \mM\iff \Lookup\mM = a.$
\item 
	Given $\mM\in\cM$ and $T'\in\Tapes$, we write $\appT{\mM}{T'}$ for the machine $\tuple{\mM.\vec R,\mM.P,\appT{\mM.T}{T'}}$.
\item
	Define the \emph{application map} $(\App{}{}) : \Addrs\times\Addrs\to \Addrs$ by setting $\App{a}{b} = \Lookup (\append{\Lookinv{a}}{b})$, i.e., the \emph{application} of $a$ to $b$ is the unique address $c$ of the EAM obtained by adding $b$ at the end of the input tape of the EAM $\Lookinv{a}$.	  	
\esub
\end{definition}

\begin{example}\label{ex:extabsmach} The following are examples of EAMs (whose registers are assumed uninitialized, i.e.\ $\vec R = \vec \Null$). 
\bsub
\item $\mach{Succ1} := \tuple{R_0, \Load 0; \Succ 0 0; \Call 0, []}$.
\item $\mach{Succ2} := \tuple{R_0, R_1, \Load 0;\Load 1; \Apply 0 1 1; \Apply 0 1 1; \Call 1,[a_{\mach{S}}]}$, where $a_{\mach{S}} = \Lookup{\mach{Succ1}}$.
\item $\mach{Add\_aux} := \tuple{\vec R,P, []}$ with $\mach{Add\_aux}.r = 5$ and $P = \Load (0, 1, 2); \Pred 1 3; \Succ 2 4;\Apply 0 3 0;$ $\Apply 0 4 0; \Ifz 1 2 0 0; \Call 0$.
\esub
\end{example}

\begin{remark}\label{rem:blackhole} 
In general, there are uncountably many possible address table maps of arbitrary computational complexity. 
A natural example of such maps is given by \emph{G\"odelization}, which can be performed effectively.
The framework is however more general and allows to consider non-r.e.\ sets of addresses like the complement $K^c$ of the halting set
\[
	K = \set{(i, x) \st \textrm{the program $i$ terminates when run on input $x$}}
\]
and a non-computable function $\Lookup: \mathcal{M}_{K^c} \to K^c$ as a map.
\end{remark} 
In an implementation of EAMs the address table map should be computable---one can choose a fresh address from $\Addrs$ whenever a new machine is constructed, save the correspondence in some table and retrieve it in constant time.

\begin{remark}\label{rem:blackhole} 
Depending on the chosen address table map, it might be possible to construct infinite (static) chains of EAMs $(\mM)_{n\in\nat}$, e.g., 
$\mM_n = \tuple{R_0 = \Lookup\mM_{n+1},\varepsilon,[]}$.
\end{remark} 
\noindent The results we present are independent from the choice of $\Lookup$.
\subsection{Operational semantics}

The operational semantics of extended addressing machines is given through a small-step rewriting system. 
The reduction strategy is deterministic, since the only applicable rule at every step is univocally determined by the first instruction of the internal program, the contents of the registers and the head of the tape.

\begin{definition} We introduce a fresh constant $\err\notin\cM$ to represent a machine raising an error.
\bsub
\item Define a reduction strategy $\redh$ on EAMs, representing one step of computation, as the least relation $\redh\ \subseteq\cM\times(\cM\cup\set{\err})$ closed under the rules in Figure~\ref{fig:am:small_step}.
\item The \emph{multistep reduction} $\reddh$ is defined as the transitive-reflexive closure~of~$\redh$.
\item Given $\mM, \mN, \mM\reddh\mN$, we write $|\mM\reddh\mN|\in\nat$ for the length of the (unique) reduction path from $\mM$ to $\mN$.
\item For $\mM,\mN\in\cM$, we write $\mM\convh\mN$ if they have a common reduct $\mach{Z}\in\cM\cup\set{\err}$, i.e.\ $\mM\reddh \mach{Z} {~}_{\mach{c}}\!\!\twoheadleftarrow\mN$.
\item An extended address machine $\mM$: \emph{is in final state} if it cannot reduce, written $\mM\not\redh$; \emph{reaches a final state} if $\mM\reddh\mM'$ for some $\mM'\in\cM$ in final state; \emph{raises an error} if $\mM\reddh \err$;  \emph{does not terminate}, otherwise.
\esub
\end{definition}
\begin{figure*}[t!]
\centering
{\bf Unconditional rewriting rules}
\[	\bar{rcl}
	\tuple{\vec R,\Call i; P,T}&\redh&\appT{\Lookinv{\val{R_i}}}{T}\\
	\tuple{\vec R,\RaS i;P,\Cons a{T}} &\redh& \tuple{\vec R[R_i := a],P,T}\\
	\tuple{\vec R,\Apply i j k; P,T}&\redh&\tuple{\vec R[R_k := \App{\val{R_i}}{\val{R_j}}],P,T}\\
	\ear
\]
{\bf Under the assumption that $\Lookinv{\val{R_i}}\not\redh$ (i.e., it is in final state).}
\[
	\bar{rcl}
	\tuple{\vec R,\Pred i j;P,T} &\redh& 
	\begin{cases}
	\tuple{\vec R[R_j := \val{R_i} \ominus 1,P,T},&\textrm{if }\val{R_i} \in \nat,\\
	\err,&\textrm{otherwise}.
	\end{cases}
	\\[3ex]
	\tuple{\vec R,\Succ i j;P,T} &\redh& 
	\begin{cases}
	\tuple{\vec R[R_j := \val{R_i} \oplus 1,P,T},&\textrm{if }\val{R_i} \in \nat,\\
	\err,&\textrm{otherwise}.
	\end{cases}
	\\[3ex]	
	\tuple{\vec R,\Ifz i j k l; P, T}&\redh&\begin{cases}
		\tuple{\vec R[R_l := \val{R_j}], P, T},&\textrm{if }\val{R_i} = 0,\\[3pt]
		\tuple{\vec R[R_l := \val{R_k}], P, T},&\textrm{if }\val{R_i} \in \nnat,\\
		\err,&\textrm{otherwise}.
		\end{cases}\\[3ex]
	\ear
\]
{\bf Under the assumption that $\Lookinv{\val{R_i}}\redh \mach A$ (i.e., it is not in final state).}
\[
	\bar{rcl}
	\tuple{\vec R,\Pred i j;P,T} &\redh& 
	\tuple{\vec R[R_i := \Lookup{\mach A}],\Pred i j;P,T}
	\\[1ex]
	\tuple{\vec R,\Succ i j;P,T} &\redh& 
	\tuple{\vec R[R_i := \Lookup{\mach A}],\Succ i j;P,T}
	\\[1ex]	
	\tuple{\vec R,\Ifz i j k l; P, T}&\redh&\tuple{\vec R[R_i := \Lookup{\mach A}],\Ifz i j k l; P, T}\\[1ex]
	\ear
\]\caption{Small-step operational semantics for extended addressing machines.}
\label{fig:am:small_step}
\end{figure*}

Notice that since the redexes in Figure~\ref{fig:am:small_step} are not overlapping, the confluence of $\reddh$ follows easily (cf.\ \cite[Lemma~2.11(2)]{DellaPennaIM21}).
\begin{lemma}\label{lem:reduction}
If $\mM \reddh \mach{M'}$, then $\appT{\mM}{\Lookup\mN} \reddh \appT{\mach{M'}}{\Lookup\mN}$.
\begin{proof}
By induction on the length of $\mM \reddh \mach{M'}$.
\end{proof}
\end{lemma}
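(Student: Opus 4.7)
The proof goes by induction on the length $k = |\mM \reddh \mach{M'}|$, as the author indicates. The base case $k=0$ is immediate since $\mM = \mach{M'}$ and zero-length reductions are preserved trivially. For the inductive step, suppose $\mM \redh \mach{M''} \reddh \mach{M'}$ where the second part has length $k-1$; by the induction hypothesis $\appT{\mach{M''}}{\Lookup\mN} \reddh \appT{\mach{M'}}{\Lookup\mN}$, so it suffices to establish the one-step claim $\appT{\mM}{\Lookup\mN} \redh \appT{\mach{M''}}{\Lookup\mN}$. The plan is then a case analysis on the rule used in $\mM \redh \mach{M''}$, using the rules in Figure~\ref{fig:am:small_step}.

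The instructions $\ins{App}$, $\ins{Pred}$, $\ins{Succ}$, and $\ins{Test}$ (in both the ``final state'' and ``not in final state'' variants) never inspect or modify the input tape, so the same transition applies with any longer tape; crucially, the side conditions for the non-$\ins{App}$ arithmetic rules concern $\Lookinv{\val{R_i}}$, whose reduction behaviour is independent of the outer machine's tape, hence unaffected by appending $\Lookup\mN$. For $\Load i$, the hypothesis that the step fires means $\mM.T = \Cons{a}{T}$ is non-empty, so $\appT{\mM}{\Lookup\mN}$ has tape $\Cons{a}{(\appT{T}{[\Lookup\mN]})}$, still starting with $a$; the Load rule then fires to produce precisely $\appT{\mach{M''}}{\Lookup\mN}$. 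The only case requiring a small computation is $\Call i$: from $\tuple{\vec R,\Call i;P,\appT{T}{[\Lookup\mN]}} \redh \appT{\Lookinv{\val{R_i}}}{\appT{T}{[\Lookup\mN]}}$, we need this to equal $\appT{(\appT{\Lookinv{\val{R_i}}}{T})}{[\Lookup\mN]}$. This reduces to the associativity of tape concatenation $\appT{(\appT{T_1}{T_2})}{T_3} = \appT{T_1}{(\appT{T_2}{T_3})}$, which is immediate from viewing tapes as ordered lists.

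The main (mild) obstacle is purely bookkeeping: one must check that in every rule the new address at the end of the tape never falls into a position that the instruction actually reads. The Load rule is the only one that interacts with the tape, and there the non-emptiness assumption on the original tape ensures the appended $\Lookup\mN$ sits strictly behind the address being consumed. Once associativity of $@$ is acknowledged for the Call case, every remaining case is simply ``apply the same rule, observe the tape-suffix is untouched,'' completing the induction.
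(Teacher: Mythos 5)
Your proof is correct and follows exactly the route the paper takes: the paper's own proof is just ``by induction on the length of $\mM \reddh \mach{M'}$,'' and you have filled in the intended details (the reduction to a one-step claim, the observation that only $\ins{Load}$ and $\ins{Call}$ interact with the tape, and the associativity of tape concatenation needed in the $\ins{Call}$ case). Nothing to correct.
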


\begin{example} See Example~\ref{ex:extabsmach} for the definition of $\mach{Succ1}$, $\mach{Succ2}$, $\mach{Add\_aux}$. 
\bsub
\item We have $\appT{\mach{Succ1}}{[0]}\reddh \mach 1$ and $\appT{\mach{Succ2}}{[1]}\reddh \mach 3$.
\item Define $\mach{Add} = \appT{\mYn{0}}{[\Lookup \mach{Add\_aux}]}$, an EAM performing the addition. We show:
\[
	\bar{l}
	\appT{\mach{Add}}{[1,3]}
\redh \Tuple{(R_0 = a_{\mYn{0}},R_1 = \Lookup\mach{Add\_aux}), \Apply 0 1 0;\Apply 1 01; \Call 1,[1, 3]}\\[1ex]
\reddh \Tuple{\vec R, \Load (0,1,2); \Pred 1 3; \Succ 2 4; \Apply 0 3 0;\Apply 0 4 0;\\\Ifz 1 2 0 0; \Call 0, [\Lookup{\mach{Add}}, 1, 3]}\\[2ex]
	\reddh\Tuple{R_0 = \Lookup{\mach{Add}}, R_1 = 1, R_2 = 3, R_3, R_4, \Pred 1 3; \Succ 2 4; \Apply 0 3 0;\\\Apply 0 4 0; \Ifz 1 2 0 0; \Call 0, []}\\[1ex]
	\reddh\Tuple{R_0 = \Lookup(\appT{\mach{Add}}{[0,4]}), R_1 = 1, R_2 = 3, R_3 = 0, R_4 = 4, \Ifz 1 2 0 0; \Call 0, []}\\[1ex]
	\reddh\Tuple{R_0 = \Lookup(\appT{\mach{Add}}{[0,5]}), R_1 = 0, R_2 = 4, R_3 = 0, R_4 = 5, \Ifz 1 2 0 0; \Call 0, []}
	\reddh \mach{4}
\ear
\]
\item For $\mach{I} = \tuple{R_0 = \Null,\Load 0;\Call 0,[]}$,  $\appT{\mYn{0}}{[\Lookup\mach{I}]}\reddh\appT{\mach{I}}{[\Lookup(\appT{\mYn{0}}{[\Lookup\mach{I}]})]}$.
\item $\appT{\mYn{n}}{[\Lookup\mach M,d_1,\dots,d_n]} \reddh \appT{\mM}{[d_1,\dots,d_n,\Lookup(\appT{\mYn{n}}{[\Lookup\mM,d_1,\dots,d_n]})]},$ for all $n\ge 0$, $\mM\in\cM$, $\vec d\in\Addrs$.
\esub
\end{example}

\section{Typing Algorithm}\label{sec:Types}
\begin{figure*}[t!]
$
\bar{c}
\bar{ccc}
\infer[\mathrm{nat}]{\vdash\mM : \tint}{\Lookup\mM \in \nat}&\qquad\qquad
	\infer[\mathrm{fix}_n]{\vdash \mM : (\vec\delta\to\alpha \to \alpha)\to\vec\delta \to \alpha}{\Lookup{\mM} = a_{\mYn{n}}& \vec\delta = \delta_1\to\cdots\to\delta_n}\qquad\qquad
	&
	\infer[R_{()}]{\Delta\vdash\tuple{(),P,T} : \alpha}{\Delta\Vdash(P,T) : \alpha}
\ear\\[1ex]
	\bar{cc}	
	\infer[R_\Null]{\Delta\vdash\tuple{(R_0,\dots,R_{r}),P,T} : \alpha}{\Delta\vdash\tuple{R_0,\dots,R_{r-1},P,T} : \alpha& \val{R_r} = \Null}\qquad
	&
	\infer[R_\Types]{\Delta\vdash\tuple{(R_0,\dots,R_{r}),P,T} : \alpha}{R_r : \beta,\Delta\vdash\tuple{R_0,\dots,R_{r-1},P,T} : \alpha& \vdash \Lookinv{\val{R_r}} : \beta}
	\\[1ex]
	\infer[\mathrm{load_{\Null}}]{\Delta \Vdash (\Load i;P,[]) : \beta\to\alpha}{\Delta[R_i : \beta] \Vdash (P,[]) : \alpha}
	&
	\infer[\mathrm{load_{\Types}}]{\Delta\Vdash (\Load i;P, \Cons a {T}) : \alpha}{\Delta[R_i : \beta] \Vdash (P,T) : \alpha &\vdash \Lookinv {a} : \beta}
	\\[1ex]
	\infer[\mathrm{pred}]{\Delta, R_i : \tint \Vdash (\Pred i j;P,T) : \alpha}{(\Delta, R_i : \tint)[R_j : \tint] \Vdash (P,T) : \alpha}
	&
	\infer[\mathrm{succ}]{\Delta, R_i : \tint \Vdash (\Succ i j;P,T) : \alpha}{(\Delta, R_i : \tint)[R_j :\tint] \Vdash (P,T) : \alpha}
	\\[1ex]
	\ear
	\\[1ex]
		\infer[\mathrm{test}]{\Delta, R_i : \tint,  R_j :\beta, R_k : \beta \Vdash (\Ifz i j k l;P,T) : \alpha}{
		(\Delta, R_i : \tint,  R_j : \beta, R_k : \beta)[R_l : \beta] \Vdash (P,T) : \alpha 
		}\\[1ex]
		\infer[\mathrm{app}]{\Delta, R_i : \alpha \to \beta, R_j : \alpha \Vdash (\Apply i j k;P,T) : \delta}{
		(\Delta, R_i : \alpha \to \beta, R_j : \alpha)[R_k : \beta] \Vdash (P,T) : \delta}\\[1ex]
	\infer[\mathrm{call}]{\Delta, R_i : \alpha_1\to\dots\to\alpha_n\to\alpha \Vdash (\Call i,[\Lookup{\mM_1},\dots,\Lookup{\mM_n}]) : \alpha}{
		\vdash \mM_1 : \alpha_1 & \cdots & \vdash \mM_n : \alpha_n}
	\ear
$
\caption{Typing rules for extended addressing machines.}\label{fig:eamstyping}
\end{figure*}

Recall that the set $\Types$ of (simple) types has been introduced in Definition~\ref{def:simpletypes}\eqref{def:simpletypes1}. 
We now show that certain EAMs can be typed, and that typable machines do not raise error during their execution.

\begin{definition}\label{def:typing} 
\bsub
\item A \emph{typing context} $\Delta$ is a finite set of associations between registers and types, represented as a list $R_{i_1} : \alpha_1,\dots,R_{i_{n}} : \alpha_{n}$. The indices $i_1,\dots,i_n$ are not necessarily consecutive.
\item We denote by $\Delta[R_i : \alpha]$ the typing context $\Delta$ where the type associated with $R_i$ becomes $\alpha$. If $R_i$ is not present in $\Delta$, then $\Delta[R_i : \alpha] = \Delta,R_i : \alpha$.
\item Let $\Delta$ be a typing context, $\mM\in\cM$, $P$ be a program, $T\in\Tapes$ and $\alpha\in\Types$. We define the typing judgements
\[
	\Delta \vdash \mM : \alpha\qquad\qquad\qquad	\Delta \Vdash (P,T) : \alpha
\] 
by mutual induction as the least relations closed under the rules of Figure~\ref{fig:eamstyping}.
The rules $(\mathrm{nat})$ and $(\mathrm{fix})$ are the base cases and take precedence over $(R_\Null)$ and $(R_{\Types})$.
\item For $R_{i_1},\dots,R_{i_n}\in\vec R$, write $R_{i_1}:\beta_{i_1},\dots,R_{i_n}:\beta_{i_n}\models \vec R$ if $\Lookinv{\val{R_j}} : \beta_j$, for all $j\in\set{i_1,\dots,i_n}$.
\esub
\end{definition}

The algorithm in Figure~\ref{fig:eamstyping} deserves some discussion. As it is presented as a set of inference rules, one should reason bottom-up.
To give a machine $\mM$ a type $\alpha$, one needs to derive the judgement $\vdash \mM : \alpha$. The machines $\mach{n}$ and $\mYn{n}$ are recognizable from their addresses and the rules $(\mathrm{nat})$ and $(\mathrm{fix})$ can thus be given higher precedence. Otherwise, the rule $(R_\Types)$ allows to check whether the value in a register is typable and only retain its type, the rule $(R_\Null)$ allows to get rid of uninitialized registers.
Once this initial step is performed, one needs to derive a judgement of the form $R_{i_1}:\beta_{i_1},\dots,R_{i_n}:\beta_{i_n}\Vdash (P,T) : \alpha$, where $P$ and $T$ are the program and the input tape of the original machine respectively. This is done by verifying the coherence of the instructions in the program with the types of the registers and of the values in the input tape.
As a final consideration, notice that the rules in Figure~\ref{fig:eamstyping} can only be considered as an algorithm when the address table map is effectively given.
Otherwise, the algorithm would depend on an oracle deciding $a = \#\mM$.\\[-4ex]
\begin{remark}\label{rem:abouttypings}
\bsub
\item\label{rem:abouttypings1} For all $\mM\in\cM$ and $\alpha\in\Types$, we have $\vdash \mM : \alpha$ if and only if there exists $a\in\Addrs$ such that both $\Lookinv{a} : \alpha$ and $ \Lookup{\mM} = a$ hold.
\item\label{rem:abouttypings2} 
If $\Lookup\mM\notin \nnat[]\cup\set{a_{\mYn{n}}\st n\ge 0}$, then 
$\vdash \mM : \alpha \iff \exists\Delta\,.\, [\Delta\models \mM.\vec R\ \land\ \Delta \Vdash (\mM.P,\mM.T) : \alpha ]$
\item\label{rem:abouttypings3} The higher priority assigned to the rules $(\mathrm{nat})$ and $(\mathrm{fix})$ does not modify the set of typable machines, rather guarantees the syntax-directedness of the system.
\esub
\end{remark}

\begin{example} The following typing judgements are derivable.
\bsub
\item $\vdash \mach{Succ2} : \tint \rightarrow \tint$
\item $\vdash \mach{Add} : \tint\rightarrow\tint\rightarrow\tint$, where $\mach{Add} = \appT{\mYn{0}}{[\Lookup\mach{Add\_aux}]}$
\item For a smaller example, like $\vdash \mach{Succ1} : \tint\to\tint$, we can provide the whole derivation tree:
{\footnotesize\[
\infer[R_\Null]{\vdash \tuple{(R_0=\Null), \Load 0; \Succ 0 0; \Call 0, []} : \tint\to\tint}{\infer[R_{()}]{\vdash\tuple{(), \Load 0; \Succ 0 0; \Call 0, []} : \tint\to\tint}{\infer[\mathrm{load_{\Null}}]{\Vdash \tuple{\Load 0; \Succ 0 0; \Call 0, []} : \tint \to \tint}{\infer[\mathrm{succ}]{R_0 : \tint \Vdash \tuple{\Succ 0 0; \Call 0, []} : \tint}{\infer[\mathrm{call}]{R_0 : \tint\Vdash \tuple{\Call 0, []} : \tint}{}	}}
	}& \val{R_0} = \Null}
\]
}
\esub
\end{example}

\begin{lemma} Let $\mM\in\cM$, $\alpha\in\Types$. Assume that $\# : \mM\to\Addrs$ is effectively given.
\bsub
\item If $\mM = \tuple{\vec R = \Null,P,[]}$ then the typing algorithm is capable of deciding whether $\vdash \mM : \alpha$ holds.
\item In general, the typing algorithm semi-decides whether $\vdash \mM : \alpha$ holds.
\esub
\end{lemma}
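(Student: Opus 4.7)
The plan rests on two structural facts: (a) by Remark~\ref{rem:abouttypings}(\ref{rem:abouttypings3}) the rules of Figure~\ref{fig:eamstyping} are syntax-directed, so every candidate judgement admits at most one bottom-up exploration (up to the choice of the type guessed via $(R_\Types)$ and $(\mathrm{load}_\Types)$); (b) by hypothesis the address table map $\Lookup$ is effective, so the side conditions of $(\mathrm{nat})$ and $(\mathrm{fix}_n)$---namely $\Lookup\mM \in \nat$ and $\Lookup\mM = a_{\mYn{n}}$---are both decidable.

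For part~(i), I would first strip the null registers by iterating $(R_\Null)$ and finishing with $(R_{()})$, reducing the goal to a judgement of the form $\Vdash (P,[]) : \alpha$. I would then proceed by induction on the length of $P$. Because the tape is empty and no register is initialised, the rules $(\mathrm{load}_\Types)$ and $(R_\Types)$ never apply, and the instance of $(\mathrm{call})$ reached at the end comes with an empty list of sub-machines; hence the algorithm never spawns a recursive typing query for a foreign EAM. At each step the applicable rule is uniquely determined by pattern-matching on the head instruction of $P$ together with the shape of the target type: for instance $(\mathrm{load}_\Null)$ forces $\alpha$ to be an arrow $\beta\to\alpha'$, while the arithmetic rules force the operand register to carry type $\tint$. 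The types that enter $\Delta$ along the way are therefore all determined (never guessed), and the procedure terminates when $P$ is reduced to $\varepsilon$ or $\Call i$, both of which are checked in constant time.

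For part~(ii), in general the rules $(R_\Types)$, $(\mathrm{load}_\Types)$ and $(\mathrm{call})$ force the algorithm to issue subordinate queries $\vdash \Lookinv{a} : \beta$ for machines stored in registers or in the tape. When $\vdash \mM : \alpha$ is derivable, the syntax-directed derivation tree is finite by definition, so executing the algorithm bottom-up along this tree terminates in finitely many steps: each node only requires an address lookup (decidable by hypothesis) plus recursive typing of premises, and the overall procedure answers \emph{yes}. When no derivation exists, the infinite static chains $\mM_n = \tuple{R_0 = \Lookup\mM_{n+1},\varepsilon,[]}$ discussed in Remark~\ref{rem:blackhole} may cause the recursion to diverge without ever returning an answer. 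This yields semi-decidability of the typability problem.

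The main obstacle is to justify that the rules $(R_\Types)$ and $(\mathrm{load}_\Types)$ do not introduce genuine non-determinism: at face value they ask one to \emph{guess} a type $\beta$ for the content of a register before recursing. In part~(i) these rules are inapplicable, so the difficulty is void. In part~(ii) one must observe that any $\beta$ appearing in a valid derivation is itself witnessed by a sub-derivation $\vdash \Lookinv{a} : \beta$; since simple types are countable and the sought derivation tree is finite, one can dovetail the bottom-up search over all candidate $\beta$, thereby preserving termination whenever an admissible derivation exists.
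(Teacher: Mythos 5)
Your proposal is correct and follows essentially the same route as the paper's (sketched) proof: part (i) reduces to checking $\Vdash (P,[]) : \alpha$ by induction on the length of $P$, noting that with a null state and empty tape no subordinate typing query for another machine is ever issued, and part (ii) handles the type guesses in $(R_\Types)$ and $(\mathrm{load}_\Types)$ by dovetailing over the countable set of types, with non-termination witnessed by the static chains of Remark~\ref{rem:blackhole}. Your write-up merely makes explicit a few details the paper leaves implicit (e.g., that the types entering $\Delta$ in case (i) are forced by the shape of the target type).
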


\begin{proof}(Sketch) (i) In this case, $\vdash \mM : \alpha$ holds if and only if $\Vdash (\mM.P,[])$ does. By induction on the length of $\mM.P$, one verifies if it is possible to construct a derivation. Otherwise, conclude that $\vdash \mM : \alpha$ is not derivable.

(ii) In the rules $(R_\Types)$ and $(\mathrm{load}_\mathbb{T})$, one needs to show that a type for the premises exists.
As the set of types is countable, and effectively given, one can easily design an algorithm constructing a derivation tree (by dovetailing).
However, the algorithm cannot terminate when executed on $\mM_0$ from Remark~\ref{rem:blackhole}.
\end{proof}

The machine $\mM_0$ in Remark~\ref{rem:blackhole} cannot be typable because it would require an infinite derivation tree.

\begin{proposition}\label{prop:typing} 
Let $\mM,\mach {M'},\mN,\in\cM$ and $\alpha,\beta\in\Types$.
\bsub
\item\label{prop:typing1}  If $\vdash \mM : \beta \to \alpha$ and $\vdash\mN : \beta$ then $\vdash\appT{\mM}{[\Lookup{\mN}]} : \alpha$.
\item\label{prop:typing2}  If $\vdash \mM : \alpha$ and $\mM\redh \mN$ then $\vdash \mN : \alpha$.
\item\label{prop:typing4}  If $\vdash \mM : \tint$ then either $\mM$ does not terminate or $\mM\reddh \mach{n}$, for some $n\ge 0$.
\item\label{prop:typing3}  If $\vdash \mM : \alpha$ then $\mM$ does not raise an error.
\esub
\end{proposition}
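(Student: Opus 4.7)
The plan is to prove the four parts in order, since each later part builds on the earlier ones.

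For part~(i), I would proceed by induction on the structure of the typing derivation of $\vdash \mM : \beta \to \alpha$. The derivation first processes the registers through the rules $(R_{()}), (R_\Null), (R_\Types)$ until reaching a judgement of the shape $\Delta \Vdash (P,T) : \beta \to \alpha$. The key substep is then an inner induction showing: if $\Delta \Vdash (P,T) : \beta \to \alpha$ and $\vdash \mN : \beta$, then $\Delta \Vdash (P, \appT{T}{[\Lookup \mN]}) : \alpha$. The two essential base cases are $(\mathrm{load}_\Null)$ with empty tape, which becomes $(\mathrm{load}_\Types)$ consuming $\beta$ via the hypothesis on $\mN$, and $(\mathrm{call})$ with an $n$-element tape, which becomes $(\mathrm{call})$ with $n{+}1$ arguments. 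All other cases propagate the inductive hypothesis through the first instruction.

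For part~(ii), subject reduction, I would do a case analysis on the rule of Figure~\ref{fig:am:small_step} realising $\mM \redh \mN$. The $\Call$ case reads off the typing $R_i : \vec{\alpha} \to \alpha$ together with the typings of the tape entries $\Lookinv{\val{R_j}} : \alpha_j$ and concludes via $n$ applications of part~(i). The $\App$ case uses part~(i) directly: the address $\App{\val{R_i}}{\val{R_j}}$ represents $\append{\Lookinv{\val{R_i}}}{\val{R_j}}$, which inherits the codomain type. The $\ins{Pred},\ins{Succ},\ins{Test}$ cases on a final-state $R_i$ rely on the fact that a numerical register remains at $\tint$ under arithmetic, respectively that both branches share a type. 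The ``non-final-state'' variants of these rules replace $\val{R_i}$ by $\Lookup \mach A$ where $\Lookinv{\val{R_i}} \redh \mach A$; by induction $\vdash \mach A : \tint$, so the rewritten machine remains typable with the same premises.

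For part~(iii), using~(ii) repeatedly, it is enough to establish that any $\mM'$ in final state with $\vdash \mM' : \tint$ is $\mach n$ for some $n$. A final state means either $\mM'.P = \varepsilon$ or $\mM'.P$ starts with $\ins{Load}$ and $\mM'.T = []$. In the first case, the only base rules yielding $\Vdash$-free conclusions are $(\mathrm{nat})$ and $(\mathrm{fix}_n)$; the latter gives an arrow type, so $(\mathrm{nat})$ applies and $\mM' = \mach n$. In the second, the derivation must go through $(\mathrm{load}_\Null)$, which forces an arrow type, contradicting $\tint$. For part~(iv), $\err$ only appears in $\ins{Pred},\ins{Succ},\ins{Test}$ when $\Lookinv{\val{R_i}}$ is in final state with $\val{R_i} \notin \nat$; but typing forces $\vdash \Lookinv{\val{R_i}} : \tint$, hence by~(iii) either $\Lookinv{\val{R_i}}$ diverges (so the error rule never fires) or reduces to $\mach n$ first, so $\val{R_i} = n \in \nat$ when the rule triggers. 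Combined with~(ii), no reduct of $\mM$ can step to $\err$.

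The main obstacle is part~(i), since it requires a careful synchronisation between the typing derivation and the tape structure: the arrow on the right of the turnstile is ``absorbed'' either immediately by a $\ins{Load}$ or lazily through a deferred $\ins{Call}$, and both scenarios must be handled uniformly. A secondary subtlety is the inner-reduction cases of part~(ii), where we need typing to be insensitive to which specific address a register stores, only to the type of the underlying machine.
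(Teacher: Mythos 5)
Your overall strategy coincides with the paper's: part (i) is proved by a simultaneous induction on the $\vdash$- and $\Vdash$-derivations with the $\mathrm{load}$ and $\mathrm{call}$ rules as the key cases, part (ii) by case analysis on the first instruction of the program using (i) for $\ins{App}$ and $\ins{Call}$, and parts (iii)--(iv) exactly as in the paper. There is, however, one case of part (i) that your scheme as stated does not cover: when the derivation of $\vdash \mM : \beta\to\alpha$ ends with the axiom $(\mathrm{fix}_n)$, i.e.\ $\mM = \mYn{n}$ and $\beta\to\alpha = (\vec\delta\to\gamma\to\gamma)\to\vec\delta\to\gamma$. Here the typing derivation does \emph{not} ``process the registers through $(R_{()}), (R_\Null), (R_\Types)$'', so your inner induction on $\Delta\Vdash(P,T):\beta\to\alpha$ never gets started; moreover the extended machine $\appT{\mYn{n}}{[\Lookup\mN]}$ has a non-empty tape, hence a different address, so the axiom $(\mathrm{fix}_n)$ no longer applies to it. One must instead unfold the stored program of $\mYn{n}$ and exhibit a direct typing derivation for $\appT{\mYn{n}}{[\Lookup\mN]}$ via $(R_\Types)$, $(\mathrm{load}_\Types)$, $(\mathrm{app})$ and $(\mathrm{call})$, using $\Lookinv{a_{\mYn{n}}}=\mYn{n}$ to type the register holding the machine's own address --- this is precisely the displayed derivation the paper gives for $n=0$. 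The case $(\mathrm{nat})$ is genuinely vacuous since it only yields type $\tint$, but $(\mathrm{fix}_n)$ is not, and it is the one base case requiring real work; everything else in your proposal matches the paper's argument.
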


\begin{proof} (i) Simultaneously, one proves that $\Delta\Vdash (P,T) : \beta\to\alpha$ and $\vdash \mN : \beta$ imply $\Delta\Vdash(P,\appT{T}{[\Lookup \mN]}) : \alpha$.
Proceed by induction on a derivation of $\vdash \mM : \beta\to\alpha$ (resp.\ $\Delta\Vdash (P,T) : \beta\to\alpha$). 

Case ($\mathrm{nat}$) is vacuous.

Case ($\mathrm{fix_n}$). We show the case for $n = 0$, the others being similar. By definition of $\mYn{0}$, we have:
\[
	\appT{\mYn{0}}{[\Lookup\mN]} = \Tuple{(\Null,a_{\mYn{0}}), \Load 0 ; \Apply 1 0 1;\Apply 0 1 0; \Call 0,[\Lookup{\mN}]}.
\]
Notice that, in this case, $\beta = \alpha\to\alpha$. 
Using $\Lookinv{a_{\mYn{0}}} = \mYn{0}$, we derive:
{\footnotesize
\[
	\infer[R_\mathbb{T}]{\vdash\tuple{(R_0 = \Null,R_1 = a_{\mYn{0}}), \Load 0 ; \Apply 1 0 1;\Apply 0 1 0; \Call 0,[\Lookup{\mN}]} : \alpha}{\infer=[R_{()};\,R_\emptyset]{
	R_1 : (\alpha\to\alpha)\to\alpha\Vdash \tuple{R_0=\Null,\Load 0 ;\cdots,[\Lookup{\mN}]} : \alpha
	}{
	\infer[\mathrm{load}_\mathbb{T}]{R_1 : (\alpha\to\alpha)\to\alpha\Vdash (\Load 0 ;\cdots,[\Lookup{\mN}]) : \alpha}{
		\infer=[\mathrm{app};\mathrm{app}]{R_0 : \alpha\to\alpha,R_1 : (\alpha\to\alpha)\to\alpha\Vdash (\Apply 1 0 1;\cdots,[]) : \alpha}{
				\infer[\mathrm{call}]{R_0 : \alpha,R_1 :\alpha\Vdash ( \Call 0,[]) : \alpha}{}
		}
		&
		\vdash \mN : \alpha\to\alpha
		}
	}
		&
		\infer{\vdash \mYn{0} : (\alpha\to\alpha)\to\alpha}{\mathrm{fix}_0}
	}
\]
}
\indent Case $\mathrm{load}_\Null$. Then $P = \Load i;P'$, $T = []$ and $\Delta[R_i : \beta] \Vdash (P',[]) : \alpha$. 
By assumption $\vdash \mN : \beta$, so we conclude $\Delta \Vdash (\Load i;P',[]) : \alpha$ by applying $\mathrm{load}_\Types$.
All other cases derive straightforwardly from the IH.

(ii) The cases $\mM = \mYn{n}$ or $\mM = \mach{n}$ for some $n\in\nat$ are vacuous, as these machines are in final state.
Otherwise, by Remark~\ref{rem:abouttypings}\eqref{rem:abouttypings2},  $\Delta \Vdash (\mM.P,\mM.T) : \alpha$ for some $\Delta\models \mM.\vec R$.
By cases on the shape of $\mM.P$. 

Case $P = \Load i;P'$. Then $\mM.T = \Cons a T'$ otherwise $\mM$ would be in final state, and $\mN = \tuple{\vec R[R_i := a],P',T'}$. 
From $(\mathrm{Load}_\Types)$ we get $\Delta[R_i : \beta] \Vdash (P',T') : \alpha$ for some $\beta\in\Types$ satisfying $\Lookinv {a} : \beta$. 
As $\Delta\models \vec R$ we derive $\Delta[R_i : \beta]\models \vec R[R_i := a]$, so as $N = \Tuple{\vec R[R_i := a],P',T'}$, by Remark~\ref{rem:abouttypings}\eqref{rem:abouttypings2}, $\vdash N : \alpha$.

Case $P = \Call i$. Then $R_i : \alpha_1\to\cdots\to\alpha_n\to\alpha$, $T = [\Lookup{\mM_1},\dots,\Lookup{\mM_n}]$ and $\vdash \mM_j : \alpha_j$, for all $j\le n$.
In this case, $\mN = \appT{\Lookinv{\val{(\mM.R_i)}}}{T}$ with $\vdash \Lookinv{\val{(\mM.R_i)}} : \alpha_1\to\cdots\to\alpha_n\to\alpha$, so we conclude by (i).

All other cases follows easily from the IH.

  (iii) Assume that $\vdash \mM : \tint$ and $\mM\reddh \mN$ for some $\mN$ in final state. By (ii), we obtain that $\vdash \mN: \tint$ holds, therefore $\mN= \mach{n}$ since numerals are the only machines in final state typable with $\tint$.

 (iv) The three cases from Figure~\ref{fig:am:small_step} where a machine can raise an error are ruled out by the typing rules ($\mathrm{pred}$), ($\mathrm{succ}$) and ($\mathrm{test}$), respectively. Therefore, no error can be raised during the execution.
\end{proof}

\section{Translation and Simulation}\label{sec:trans}

We define a type-preserving translation from \EPCF{} terms to extended addressing machines. 
More precisely, we show that if $\Gamma\vdash^\E M : \alpha$ is derivable then $M$ is transformed into a machine $\mM$ which is typable with the same $\alpha$.
By Proposition~\ref{prop:typing}, $\mM$ never raises a runtime error and well--typedness is preserved during its execution.
We then show that if a well-typed \EPCF{} program $M$ computes a value $\num n$, then its translation $\mM$ reduces to the corresponding EAM~$\mach{n}$.
Finally, this result is transported to \PCF{} using their equivalence on programs of type $\tint$.

We start by showing that EAMs implementing the main $\PCF{}$ instructions are definable.
We do not need any machinery for representing explicit substitutions because they are naturally modelled by the evaluation strategy of EAMs.

\begin{lemma}\label{lem:existenceofeams} 
Let $n\ge 0$. There are EAMs satisfying (for all $a,b,c,d_1,\dots,d_n\in\Addrs$):
\bsub
\item\label{lem:existenceofeams1} $\appT{\mProj{n}{i}}{[d_1,\dots,d_n]}\hspace{25pt} \reddh d_i$, for $1\le i \le n$;
\item\label{lem:existenceofeams2}  
	$\appT{\mAppn{n}}{[a,b,d_1,\dots,d_n]} \reddh \appT{\Lookinv{a}}{[d_1,\dots,d_n,b\cdot d_1\cdots d_n]}$;
\item\label{lem:existenceofeams3}  
	 $\appT{\mPredn{n}}{[a,d_1,\dots,d_n]}\reddh \tuple{R_0 = a\cdot d_1\cdots d_n,\vec R,;\Pred00;\Call 0,[]}$;
\item\label{lem:existenceofeams4}   
	$\appT{\mSuccn{n}}{[a,d_1,\dots,d_n]}\reddh\tuple{R_0 = a\cdot d_1\cdots d_n,\vec R,;\Succ00;\Call 0,[]}$;
\item\label{lem:existenceofeams5}
	$\appT{\mIfZn{n}}{[a,b,c,d_1,\dots,d_n]}\reddh\tuple{R_0 = a\cdot \vec d,R_1 = b\cdot \vec d,R_2 = c\cdot \vec d,\vec R, \Ifz 0120;\Call 0,[]}$.
\esub
\end{lemma}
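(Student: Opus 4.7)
My approach is constructive: for each of the five clauses I would exhibit an explicit EAM and then read off the claimed reduction by direct application of the small-step rules of Figure~\ref{fig:am:small_step}. The single algebraic ingredient needed throughout is the identity $\Lookinv{\App{x}{y}} = \appT{\Lookinv{x}}{[y]}$, which is immediate from the definition of the application map in Definition~\ref{def:bijectivelookup+Y}. Consequently a chain of $\ins{App}$ instructions internalises the extension of a tape: accumulating $\App{(\App{\cdots\App{a}{d_1}\cdots}{d_{n-1}})}{d_n}$ into a single register $R_0$ and then issuing $\Call 0$ on the empty tape produces exactly $\appT{\Lookinv{a}}{[d_1,\dots,d_n]}$, by $n$ applications of the identity.

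For clause~(i) I would take $\mProj{n}{i} = \tuple{R_1=\Null,\dots,R_n=\Null,\Load(1,\dots,n);\Call i,[]}$: on input tape $[d_1,\dots,d_n]$ the $n$ loads fill $R_1,\dots,R_n$ with $d_1,\dots,d_n$, and the final $\Call i$ hands control to $\Lookinv{d_i}$. For clauses~(iii)--(v) the template is uniform: load all incoming addresses into registers $R_0,R_1,\dots,R_n$ (with $R_0=a$ and $R_j=d_j$), then accumulate $a\cdot d_1\cdots d_n$ into $R_0$ via $\Apply 0 1 0;\,\Apply 0 2 0;\,\dots;\,\Apply 0 n 0$, leaving the residual program equal to the intended tail---$\Pred 0 0;\Call 0$ in~(iii), $\Succ 0 0;\Call 0$ in~(iv)---with empty input tape. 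For~(v) three parallel accumulations are performed in $R_0,R_1,R_2$ so that at the point of the residual $\Ifz 0 1 2 0;\Call 0$ those three registers hold $a\cdot d_1\cdots d_n$, $b\cdot d_1\cdots d_n$, $c\cdot d_1\cdots d_n$ respectively.

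Clause~(ii) is the step I expect to require the most care, because the final $\Call$ must hand $\Lookinv{a}$ the complete tape $[d_1,\dots,d_n,b\cdot d_1\cdots d_n]$, not merely $[b\cdot d_1\cdots d_n]$. My resolution is to avoid touching the input tape altogether after the initial loads: let $\mAppn{n}$ have $n+2$ registers and load $a,b,d_1,\dots,d_n$ into $R_0,R_1,R_2,\dots,R_{n+1}$; first build $b\cdot d_1\cdots d_n$ into $R_1$ by $\Apply 1 2 1;\,\dots;\,\Apply 1 {n+1} 1$; then collapse everything into $R_0$ by $\Apply 0 2 0;\,\dots;\,\Apply 0 {n+1} 0;\,\Apply 0 1 0$; and finally issue $\Call 0$ on the (now empty) tape. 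Repeated use of the identity above turns this call into $\appT{\Lookinv{a}}{[d_1,\dots,d_n,b\cdot d_1\cdots d_n]}$, as claimed. Beyond this ordering point, the lemma is pure bookkeeping and each clause is verified by a finite unrolling of the small-step semantics.
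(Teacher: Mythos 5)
Your proposal is correct and takes essentially the same route as the paper: the paper's proof simply exhibits $\mPredn{n} = \Tuple{R_0,\dots,R_n, \Load(0,\dots, n);\Apply 010;\cdots;\Apply 0n0;\Pred00;\Call 0,[]}$ --- identical to your clause~(iii) machine --- and declares the other cases similar. Your extra care on clause~(ii), with the two-phase accumulation before a single $\Call 0$ on an empty tape, matches the pattern the paper itself uses in the definition of $\mYn{n}$, so the only (cosmetic) quibble is that your $\mProj{n}{i}$ should index its registers from $R_0$ to fit the tuple format of Definition~\ref{def:AM}.
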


\begin{proof} Easy. As an example, we give a possible definition of the predecessor:
\[
\mPredn{n} = \Tuple{R_0,\dots,R_n, \Load(0,\dots, n);\Apply 010;\cdots;\Apply 0n0;\Pred00;\Call 0,[]}
\]
The others are similar.
\end{proof}

\begin{lemma}\label{lem:welltypedeams}  
The EAMs in the previous lemma can be defined in order to ensure their typability (for all $n\ge0$, $\alpha,\beta,\gamma,\delta_i\in\Types$):
\bsub
\item\label{lem:welltypedeams1}
	$\vdash \mProj{n}{i} : \vec \delta\to\delta_i$, with $\vec \delta = \delta_1\to\cdots\to\delta_n$;
\item\label{lem:welltypedeams2} 
	$\vdash\mAppn{n} : (\vec \delta\to\beta\to\alpha)\to(\vec \delta\to\beta)\to\vec\delta\to\alpha$;
\item\label{lem:welltypedeams3}
	$\vdash \mPredn{n} : (\vec \delta\to\tint)\to\vec \delta\to\tint$;
\item\label{lem:welltypedeams4} 
	$\vdash\mSuccn{n} : (\vec \delta\to\tint)\to\vec \delta\to\tint$;
\item\label{lem:welltypedeams5} 
	$\vdash\mIfZn{n} : (\vec \delta\to\tint)\to(\vec \delta\to\alpha)\to(\vec \delta\to\alpha)\to\vec \delta\to\alpha$;
\esub
\end{lemma}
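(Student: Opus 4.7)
The plan is to revisit, for each of the five machines, the explicit construction sketched in Lemma~\ref{lem:existenceofeams} and build a typing derivation under the rules of Figure~\ref{fig:eamstyping} that yields the claimed type. Since all these machines are introduced with uninitialised registers and empty input tape, the outermost part of every derivation is identical: one peels off the registers via repeated applications of $(R_\Null)$ followed by $(R_{()})$, obtaining a goal of the form $\Vdash (P,[]) : \tau$ with $\tau$ the target type; then one applies $(\mathrm{load}_\Null)$ once per $\Load$ instruction, introducing hypothesised typings $R_{i_j}:\sigma_j$ that record the announced argument types (e.g.\ $\vec\delta\to\tint$ for the ``function'' slot of $\mPredn{n}$, and $\delta_1,\dots,\delta_n$ for the parameter slots).

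Once this prologue is in place, the body of each machine is typed by a straightforward walk down the remaining instruction list. For $\mPredn{n}$ one repeatedly applies $(\mathrm{app})$ to whittle the type of $R_0$ from $\vec\delta\to\tint$ down to $\tint$, using the hypotheses on $R_1,\dots,R_n$; once $R_0:\tint$ is reached, $(\mathrm{pred})$ types the $\Pred 0 0$ instruction and $(\mathrm{call})$ concludes at output $\tint$. The case of $\mSuccn{n}$ is identical with $(\mathrm{succ})$ in place of $(\mathrm{pred})$. For $\mIfZn{n}$ we run three parallel $\Apply$-chains: the test register shrinks along $\vec\delta\to\tint$, while the two branch registers shrink along $\vec\delta\to\alpha$; when all three reach their leaf types, $(\mathrm{test})$ produces a register of type $\alpha$ and $(\mathrm{call})$ returns $\alpha$. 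For $\mAppn{n}$ one runs two interleaved $\Apply$-chains, narrowing $R_0$ from $\vec\delta\to\beta\to\alpha$ to $\beta\to\alpha$ and $R_1$ from $\vec\delta\to\beta$ to $\beta$; a final $\Apply 0\,1\,0$ yields $\alpha$, at which point $(\mathrm{call})$ with empty tape concludes. The machine $\mProj{n}{i}$ is the simplest: after the $n$ loads the context contains $R_i:\delta_i$, and a single use of $(\mathrm{call})$ with the empty tape delivers $\delta_i$.

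I expect the main obstacle to be not mathematical but presentational: each derivation has depth linear in $n$, so rather than writing it out in full one should either prove typability schematically by an induction on $n$ (using a doubled inference line for the repeated $(\mathrm{app})$ steps, as elsewhere in the paper) or give the derivation for a small $n$ and indicate the obvious generalisation. A minor technical point is that the programs chosen in Lemma~\ref{lem:existenceofeams} must not overwrite a register whose type is still needed downstream; but since the statement explicitly grants us the freedom to redesign the machines (``the EAMs in the previous lemma can be defined in order to ensure their typability''), we can always allocate a fresh auxiliary register when a reuse would clash, so this is a bookkeeping matter rather than a genuine obstruction.
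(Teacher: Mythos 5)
Your proposal is correct and follows exactly the route the paper intends: the paper's own proof is just the one-line assertion that ``the naive implementations are, in fact, typable,'' and your derivation sketch (peeling registers with $(R_\Null)$/$(R_{()})$, one $(\mathrm{load}_\Null)$ per \texttt{Load}, then chains of $(\mathrm{app})$ feeding into $(\mathrm{pred})$/$(\mathrm{succ})$/$(\mathrm{test})$ and a final $(\mathrm{call})$) is precisely the verification being left implicit. Your closing remark about register reuse is prudent but not actually needed for the implementations of Lemma~\ref{lem:existenceofeams}, since each \texttt{App} there only overwrites a register whose old type is no longer required.
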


\begin{proof} The naive implementations are, in fact, typable.
\end{proof}

We will show that, using the auxiliary EAMs given in Lemma~\ref{lem:welltypedeams}, we can translate any $\EPCF$ term into an EAM. In order to proceed by induction, we first need to define the size of an \EPCF{} term.

\begin{definition}\label{def:termlength}
Let $M$ be an \EPCF{} term and $\sigma$ be an explicit substitution. The sizes $\size{-}$ of $\sigma$, $M$ and $(\sigma,M)$ are defined by mutual induction, e.g.
\[
	\bar{rclcrcl}
	\size{[]} &=& 0,&\qquad&\size{(\sigma,M)} &=& \size{\sigma} + \size{M},\\
	\size{\rho+[x\leftarrow(\rho',N)]} &=& \size{\rho}+\size{(\rho',N)},&&\size{\lam x.\esubst{M}{\sigma}} &=& \size{(\sigma,M)} + 1,\\
	\ear
\]
and the other cases of $\size{M}$ are standard whence they are omitted.
\end{definition}

Intuitively, an \EPCF{} term $M$ having $x_1,\dots,x_n$ as free variables is translated as an EAM $\mach{M}$ loading $n$ arguments as input.

\begin{definition}\label{def:trans} (Translation) Let $M$ be an \EPCF{} term and $\sigma$ be an explicit substitution such that $\FV{M} \subseteq \dom(\sigma) \cup \{\vec x\}$, where $\vec x = x_1,\dots,x_n$. 
The \emph{translation} of the pair $(\sigma, M)$ (w.r.t\ $\vec x$) is a machine denoted $\etrans{\sigma, M}{}{\vec x} \in \cM$, or simply $\trans{M}{}{\vec x}$ when $\sigma$ is empty.\footnote{In other words, we set $\trans{M}{}{\vec x} = \etrans{[], M}{}{\vec x}$.}
The machine $\etrans{\sigma, M}{}{\vec x}$ is defined by induction on $\size{(\sigma, M)}$ as follows:
\[
	\bar{lcll}
	\etrans{\sigma+[y\leftarrow (\tau, N)], M}{\alpha}{\vec x} &=&  
	\appT{\etrans{\sigma, M}{\alpha}{y,\vec x}}{[\Lookup\etrans{\tau, N}{}{}]};\\[0.8ex]
	\trans{x_i}{\alpha}{\vec x} &=& \mProj{n}{i},& 
	\\[0.8ex]	
	\trans{\lambda y.\esubst{M}{\sigma}}{\alpha\to \beta}{\vec x}&= & \etrans{\sigma, M}{\beta}{\vec x, y}, \textrm{ where wlog }y\notin\vec x;
	\\[0.8ex]
	\trans{M\cdot N}{\alpha}{\vec x} &=&  \appT{\mAppn{n}}{[\Lookup\trans{M}{\beta\to\alpha}{\vec x},\Lookup \trans{N}{\beta}{\vec x}]};
	\\[0.8ex]
	\trans{\num k}{\tint}{\vec x} &=& \appT{\mProj{n+1}{1}}{[k]}, \textrm{ where }k\in\nat;
	\\[0.8ex]
	\trans{\pred M}{\tint}{\vec x} &=& \appT{\mPredn{n}}{[\Lookup{\trans{M}{\tint}{\vec x}}]};
	\\[0.8ex]
	\trans{\succ M}{\tint}{\vec x} &=& \appT{\mSuccn{n}}{[\Lookup{\trans{M}{\tint}{\vec x}}]};
	\\[0.8ex]
	\trans{\ifterm LMN}{\alpha}{\vec x} &=& \appT{\mIfZn{n}}{[\Lookup{\trans{L}{\tint}{\vec x}},\Lookup{\trans{M}{\alpha}{\vec x}},\Lookup{\trans{N}{\alpha}{\vec x}}]};
	\\[0.8ex]
	\trans{\fix M}{\alpha}{\vec x} &=& \appT{\mYn{n}}{[\Lookup{\trans{M}{\alpha\to\alpha}{\vec x}}]}.\\
	\ear
\]
\end{definition}

We show the extended abstract machines associated by this translation to some of our running examples.

\begin{example}
1. $\trans{(\lam x.\esubst{\succ (x)}{})\cdot \num 0}{\tint}{} = \appT{\mSuccn{1}}{[\Lookup\mProj{1}{1}, 0]}$.

2. $\trans{(\lambda sn.s(sn))(\lambda x.\succ (x))}{\tint\rightarrow\tint}{} =	\appT{\mAppn{2}}{[\Lookup\mProj{2}{1},\Lookup(\appT{\mAppn{2}}{[\Lookup\mProj{2}{1},\Lookup\mProj{2}{2}]}),\Lookup(\appT{\mSuccn{1}}{[\Lookup\mProj{1}{1}]})]}$.

3. $\bar[t]{l}
\trans{\mathbf{add}}{\tint\to\tint\to\tint}{}=\appT{\mYn{0}}{\Lookup\trans{\lambda f xy. \ifterm y x {(f\cdot (\succ x)\cdot(\pred y))}}{(\tint\rightarrow\tint\rightarrow\tint)\rightarrow\tint\rightarrow\tint\rightarrow\tint}{}}\\
= \appT{\mYn{0}}{\Lookup\trans{\ifterm y x {(f \cdot(\succ x)\cdot(\pred y))}}{\tint}{f,x,y}},\\
= \appT{\mYn{0}}{\Lookup(\appT{\mIfZn{3}}{[\Lookup\mProj{3}{3},\Lookup\mProj{3}{2},\Lookup\trans{f \cdot(\succ x)\cdot(\pred y)}{\tint}{f,x,y}}]}),\\[1ex]
\ear
$\\
where $\trans{f \cdot(\succ x)\cdot(\pred y)}{\tint}{f,x,y} =
 \appT{\mAppn{3}}{[
	\Lookup{
		(\appT{\mAppn{3}}{[
			\Lookup\mProj{3}{1},\Lookup{(
				\appT{\mSuccn{3}}{[\Lookup\mProj{3}{2}]}
			)}
		]})
	}, 
	\Lookup{(\appT{\mPredn{3}}{[\Lookup\mProj{3}{3}]})}
]}$.
\end{example}

\begin{theorem}\label{thm:typabilitytransfers} 
Let $M$ be an \EPCF{} term, $\alpha\in\Types$, $\Gamma = x_1:\beta_1,\dots,x_n:\beta_n$. Then
\[
	\Gamma \vdash^\E M : \alpha \quad \Rightarrow\quad \vdash\trans{M}{}{x_1,\dots,x_n} : \beta_1\to\cdots\to \beta_n\to\alpha.
\]
\begin{proof} By induction on a derivation of  $\Gamma \vdash^\E M : \alpha$. 
As an induction loading, one needs to prove simultaneously that for all explicit substitutions $\sigma$ with $\dom(\sigma)  = \set{x_1,\dots,x_n}$, if $\sigma\models x_1 : \beta_1,\dots,x_n : \beta_n$ then $\vdash \etrans{\sigma(x_i)}{}{} : \beta_i$, for all $i\le n$.
\end{proof}
\end{theorem}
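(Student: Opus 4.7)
The plan is to proceed by induction on $\size{(\sigma, M)}$, proving the following strengthening that accommodates the translation's recursion through explicit substitutions:
\emph{if $\sigma \models \Delta$ and $\Gamma, \Delta \vdash^\E M : \alpha$, where $\Gamma = x_1:\beta_1,\dots,x_n:\beta_n$ and $\dom(\sigma)$ is disjoint from $\dom(\Gamma)$, then $\vdash \etrans{\sigma, M}{}{\vec x} : \beta_1 \to \cdots \to \beta_n \to \alpha$.}
The theorem is the special case $\sigma = []$. This strengthening is exactly what is needed for the abstraction case, where the substitution carried by $\lambda y.\esubst{M'}{\sigma'}$ does not disappear but is folded into the translation of the body.

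When $\sigma$ is non-empty, say $\sigma = \sigma' + [y \leftarrow (\tau, N)]$, I would unfold the defining clause $\etrans{\sigma, M}{}{\vec x} = \appT{\etrans{\sigma', M}{}{y, \vec x}}{[\Lookup \etrans{\tau, N}{}{}]}$. Inverting the rule $(\sigma)$ on $\sigma \models \Delta$ gives $\tau \models \Delta'$ and $\Delta' \vdash^\E N : \gamma$ with $y:\gamma$ isolated in $\Delta$. Since both $(\tau, N)$ and $(\sigma', M)$ are strictly smaller in size than $(\sigma, M)$ by Definition~\ref{def:termlength}, the IH yields $\vdash \etrans{\tau, N}{}{} : \gamma$ and $\vdash \etrans{\sigma', M}{}{y, \vec x} : \gamma \to \vec\beta \to \alpha$. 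Proposition~\ref{prop:typing}\eqref{prop:typing1} then glues these into the required typing.

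For $\sigma = []$, I would case on the last rule of $\Gamma \vdash^\E M : \alpha$, in each case exhibiting a typing derivation for $\trans{M}{}{\vec x}$ built from Lemma~\ref{lem:welltypedeams} and repeated invocations of Proposition~\ref{prop:typing}\eqref{prop:typing1}. Variables $x_i$ are covered directly by Lemma~\ref{lem:welltypedeams}\eqref{lem:welltypedeams1}; numerals $\num k$ combine Lemma~\ref{lem:welltypedeams}\eqref{lem:welltypedeams1} with rule $(\mathrm{nat})$ applied to $\mach{k}$; abstractions $\lam y.\esubst{M'}{\sigma'}$ reduce directly to the strengthened IH on the smaller pair $(\sigma', M')$ with $y$ appended to $\vec x$; applications, $\pred$, $\succ$, $\ifterm{-}{-}{-}$ and $\fix M'$ each match the \EPCF{} typing rule against Lemma~\ref{lem:welltypedeams}\eqref{lem:welltypedeams2}--\eqref{lem:welltypedeams5} (respectively, the rule $(\mathrm{fix}_n)$ for $\mYn{n}$), chaining IH applications through Proposition~\ref{prop:typing}\eqref{prop:typing1} to type each nested $\appT{\cdot}{[\cdot]}$.

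The principal subtlety is the bookkeeping: ensuring that the variable list $\vec x$ threaded through $\etrans{\cdot, \cdot}{}{\cdot}$ stays aligned with the contexts $\Gamma, \Delta$ supplied by inversion, particularly in the abstraction case (where $y$ is appended on the right) versus the substitution-peeling case (where $y$ is prepended on the left). Once the induction measure $\size{(\sigma, M)}$ is fixed, so that every peeling step strictly decreases it, each remaining case is a mechanical match between an \EPCF{} typing rule and the corresponding EAM typing rule, filtered through the appropriate item of Lemma~\ref{lem:welltypedeams}.
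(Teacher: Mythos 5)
Your proposal is correct and takes essentially the same route as the paper: the paper also strengthens the statement to handle explicit substitutions simultaneously (its second mutually inductive claim about $\sigma\models\Gamma$ plays exactly the role of your substitution-peeling step), and each term case is discharged by matching the \EPCF{} typing rule against Lemma~\ref{lem:welltypedeams} and chaining Proposition~\ref{prop:typing}\eqref{prop:typing1}, just as you describe. The only (inessential) difference is the packaging of the induction: you induct on $\size{(\sigma,M)}$ with one combined statement, while the paper uses mutual induction on the derivations of $\Gamma\vdash^\E M:\alpha$ and $\sigma\models\Gamma$.
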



\begin{theorem}\label{thm:equivalence}
Let $M$ be an \EPCF{} term and $V\in\Val$.
Then 
\[
	\substseq{\sigma}{M} \reddd V \Rightarrow \etrans{\sigma, M}{}{} \convh \trans{V}{}{}
\]
\end{theorem}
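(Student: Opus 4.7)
The approach is a structural induction on the derivation of $\substseq{\sigma}{M}\reddd V$, with one case per rule of Figure~\ref{fig:PCFheadDelay}. The whole argument rests on two preliminary observations. First, Definition~\ref{def:trans} unfolds recursively to
\[
  \etrans{\sigma,M}{}{} = \appT{\trans{M}{}{x_1,\dots,x_k}}{\,[\Lookup\etrans{\tau_1,N_1}{}{},\dots,\Lookup\etrans{\tau_k,N_k}{}{}]\,}
\]
whenever $\sigma=[x_1\leftarrow(\tau_1,N_1),\dots,x_k\leftarrow(\tau_k,N_k)]$, so that an explicit substitution is reified as a tape prefix to be consumed by the projections hidden inside $\trans{M}{}{\vec x}$. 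Second, Lemma~\ref{lem:reduction} together with the confluence of $\reddh$ imply that $\convh$ is transitive and compatible with tape extension ($\mM\convh\mN$ implies $\appT{\mM}{T}\convh\appT{\mN}{T}$); I will use both properties freely.

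Most cases follow a uniform template: unfold the translation with the appropriate clause of Lemma~\ref{lem:existenceofeams} so that the auxiliary EAMs $\mProj{}{}, \mAppn{}, \mPredn{}, \mSuccn{}, \mIfZn{}, \mYn{}$ simulate the relevant \EPCF{} construct, then apply the inductive hypothesis to the subderivation(s). Concretely, $(\valrule)$ lets $\mProj{k+1}{1}$ deliver $\mach n$ directly; for $(\varrule)$, $\trans{x_i}{}{\vec x}=\mProj{k}{i}$ retrieves $\Lookup\etrans{\rho,N}{}{}$ from the tape and the IH on $\substseq{\rho}{N}\reddd V$ concludes; the arithmetic and conditional rules are treated analogously, with the unconditional $\ins{Pred}$, $\ins{Succ}$, $\ins{Test}$ rules of Figure~\ref{fig:am:small_step} firing once the IH has reduced the integer subterm(s) to a numeral machine. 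For $(\fixrule)$, both $\etrans{\sigma,\fix M}{}{}$ and $\etrans{\sigma,M\cdot\fix M}{}{}$ reduce to a common form via the unfolding behaviour of $\mYn{n}$ spelled out just after Lemma~\ref{lem:reduction}, and the IH on $\substseq{\sigma}{M\cdot\fix M}\reddd V$ closes the case.

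The $(\betarule)$ case requires two IHs. Lemma~\ref{lem:existenceofeams}(ii) gives $\etrans{\sigma,M\cdot N}{}{}\reddh\appT{\etrans{\sigma,M}{}{}}{[\Lookup\etrans{\sigma,N}{}{}]}$. Applying the IH on $\substseq{\sigma}{M}\reddd\lam x.\esubst{M'}{\rho}$ together with the tape-compatibility property yields that this machine is $\convh$-equivalent to $\appT{\etrans{\rho,M'}{}{x}}{[\Lookup\etrans{\sigma,N}{}{}]}$, which by the definition of the translation equals $\etrans{\rho+[x\leftarrow(\sigma,N)],M'}{}{}$. The second IH on $\substseq{\rho+[x\leftarrow(\sigma,N)]}{M'}\reddd V$ and transitivity of $\convh$ then deliver the conclusion.

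The main obstacle I anticipate is the $(\funrule)$ case. The rule merges the outer substitution $\sigma$ into the closure to produce $V=\lam x.\esubst{M}{\sigma+\rho}$, whereas on the EAM side the translation of the LHS first processes $\rho$ in the extended context $\vec x,x$ and only then layers the $\sigma$-addresses on top of its tape. Unfolding both sides one obtains machines
\[
  \appT{\trans{M}{}{\vec y,\vec x,x}}{[\vec\mu,\vec\tau]} \qquad \text{and} \qquad \appT{\trans{M}{}{\vec x,\vec y,x}}{[\vec\tau,\vec\mu]}
\]
that carry the same data laid out in different orders. Bridging them requires a symmetry lemma stating that simultaneously permuting the free-variable context of $\trans{M}{}{\vec x}$ and the associated tape yields a machine $\convh$-equivalent to the original. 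This lemma is proved separately by induction on $\size{M}$, exploiting the fact that variables are only accessed positionally through $\mProj{k}{i}$, so the required bookkeeping merely amounts to replacing the projection index $i$ by $\pi^{-1}(i)$ while leaving every other clause of Definition~\ref{def:trans} unchanged.
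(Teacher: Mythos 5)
Your overall strategy---induction on the derivation of $\substseq{\sigma}{M}\reddd V$, with Lemma~\ref{lem:existenceofeams} driving each case, confluence giving transitivity of $\convh$, and Lemma~\ref{lem:reduction} giving compatibility with tape extension in the $\betarule$ case---is exactly the paper's proof, and every case except $\funrule$ matches it essentially verbatim.

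The divergence, and the gap, is in the $\funrule$ case. The paper uses no permutation lemma there: it unfolds $\etrans{\sigma,\lam z.\esubst{M'}{\rho}}{}{}$ and $\trans{\lam z.\esubst{M'}{\sigma+\rho}}{}{}$ to the \emph{same} machine and concludes by reflexivity of $\convh$, performing no reduction step at all. You are right that a naive unfolding of Definition~\ref{def:trans} yields the two layouts $\appT{\trans{M'}{}{\vec y,\vec x,z}}{T_1}$ and $\appT{\trans{M'}{}{\vec x,\vec y,z}}{T_2}$ with correspondingly permuted tapes, so some reconciliation is needed; but the symmetry lemma you propose---that permuting the context and the tape yields a machine that is merely $\convh$-equivalent---cannot close the case. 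Both machines arising in $\funrule$ are partial applications: they still await the address to be bound to $z$, so after consuming their tapes they are stuck, hence in final state. Since $\convh$ demands a common reduct, two \emph{distinct} final-state machines are never $\convh$-related; concretely, $\appT{\mProj{3}{2}}{[b,a]}$ and $\appT{\mProj{3}{1}}{[a,b]}$ reduce to distinct stuck machines (different residual programs and register contents) and cannot be joined. So the lemma, as you state it, fails precisely in the situation where you invoke it. What is needed is \emph{syntactic} equality of the two unfoldings---obtained by fixing the order in which $\sigma+\rho$ is peeled so that it agrees with the nesting produced by the abstraction clause (the freshness and disjointness of $\dom(\sigma)$ and $\dom(\rho)$ make this a harmless convention)---rather than equivalence up to reduction. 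This matters downstream: the $\betarule$ case feeds the $\funrule$ conclusion back into $\convh$ via Lemma~\ref{lem:reduction}, so the abstraction case must deliver an honest join, which for stuck machines can only mean identity.
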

\begin{proof}
By induction on a derivation of $\substseq{\sigma}{M} \reddd V$.
\end{proof}
\begin{corollary}\label{cor:EPCFprograms} For an \EPCF{} program $M$ of type $\vdash^\E M : \tint$ we have
\[
	\substseq{[\,]}{\,M} \reddd \num n \Rightarrow \trans{M}{}{} \reddh \mach{n}
\]
\end{corollary}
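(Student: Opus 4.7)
The plan is to chain together Theorem~\ref{thm:equivalence} with an elementary calculation showing that $\trans{\num n}{}{}$ itself reduces to $\mach n$, then use determinism of $\redh$ and the absence of runtime errors for typable machines to bridge $\convh$ to a one-sided $\reddh$.

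First, I would invoke Theorem~\ref{thm:equivalence} with $\sigma = [\,]$ and $V = \num n$ to obtain $\etrans{[\,], M}{}{} \convh \trans{\num n}{}{}$. By the footnote to Definition~\ref{def:trans}, $\etrans{[\,], M}{}{} = \trans{M}{}{}$, so by definition of $\convh$ there exists a common reduct $\mach Z \in \cM \cup \{\err\}$ with $\trans{M}{}{} \reddh \mach Z$ and $\trans{\num n}{}{} \reddh \mach Z$.

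Next I would unfold $\trans{\num n}{}{}$: with zero free variables ($\vec x$ empty, so the length parameter is $0$), Definition~\ref{def:trans} yields $\trans{\num n}{}{} = \appT{\mProj{1}{1}}{[n]}$. By Lemma~\ref{lem:existenceofeams}(\ref{lem:existenceofeams1}), this reduces to $\Lookinv{n} = \mach n$, which is in final state. To rule out $\mach Z = \err$, I would use typing: from $\vdash^\E M : \tint$ and Theorem~\ref{thm:typabilitytransfers} we get $\vdash \trans{M}{}{} : \tint$, and Proposition~\ref{prop:typing}(\ref{prop:typing3}) then guarantees that $\trans{M}{}{}$ never raises an error, so $\mach Z \in \cM$.

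Finally, since $\redh$ is deterministic and $\trans{\num n}{}{} \reddh \mach n$ with $\mach n$ in final state, every EAM reduct $\mach Z$ of $\trans{\num n}{}{}$ lies on the unique reduction sequence to $\mach n$ and hence satisfies $\mach Z \reddh \mach n$. Concatenating, $\trans{M}{}{} \reddh \mach Z \reddh \mach n$, which is the desired conclusion. There is no genuine obstacle here; the only mild care needed is the determinism argument to convert the two-sided $\convh$ into the one-sided $\reddh$ and the appeal to typing to exclude the $\err$ branch of the common reduct.
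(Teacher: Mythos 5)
Your proposal is correct and follows essentially the same route as the paper: apply Theorem~\ref{thm:equivalence} to get $\trans{M}{}{}\convh\trans{\num n}{}{}$, observe that $\trans{\num n}{}{}\reddh\mach n$ with $\mach n$ in final state, and use determinism of $\redh$ to collapse the two-sided $\convh$ into $\trans{M}{}{}\reddh\mach n$. Your extra appeal to typing and Proposition~\ref{prop:typing} to exclude $\err$ is harmless but not needed, since determinism already forces every reduct of $\trans{\num n}{}{}$ to lie on the unique path to $\mach n$.
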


\begin{proof} Assume that $\substseq{[\,]}{M} \reddd \num n$.
By Theorem~\ref{thm:equivalence}, we have $\etrans{[\,],M}{}{} \convh \trans{\num n}{}{}$. Since $\trans{\num n}{}{} \reddh \mach n$ and the numeral machine $\mach{n}$ is in final state we conclude $\trans{M}{\tint}{}\reddh \mach{n}$.
\end{proof}

\subsection{Applying the translation to regular \PCF}
Let us show how to apply our machinery to the usual (call-by-name) \PCF.
Our presentation follows \cite{Ong95}.

\begin{definition}\bsub\item \PCF{} \emph{terms} are defined by the grammar (for $n\ge 0$, $x\in\Var$):
	\[
		P,Q,Q'\ ::=~\,x\mid P\cdot Q \mid \lam x.P
		\mid \mathbf{0} \mid \pred P \mid \succ P\mid \ifterm P{Q}{Q'} \mid \fix P
	\]
	\item A closed \PCF{} term $P$ is called a \PCF{} \emph{program}.
	\item A \PCF{} value $U$ is a term of the form $\lam x.P$ or $\num n$, for some $n\ge 0$. 
	\item Given a \PCF{} term $P$ and a value $U$, we write $P\redd U$ if this judgement can be obtained by applying the rules from Figure~\ref{fig:bigstepPCF}.
	\item The set $\Types$ of \emph{simple types} and \emph{typing contexts} have already been defined in items \eqref{def:simpletypes1} and \eqref{def:simpletypes2} of Definition~\ref{def:simpletypes}, respectively. 
	\item Given a \PCF{} term $P$, a typing context $\Gamma$ and $\alpha\in\Types$, we write $\Gamma\vdash^\PCF P : \alpha$ if this typing judgement is derivable from the rules of Figure~\ref{fig:typesPCF}.
	\esub
\end{definition}

\begin{figure}[t!]
\[
	\bar{ccc}
	\infer[\PCFvalrule]{U\redd U}{ U\in \Val}
	&
	\infer[\predzrule]{\pred P \redd \mathbf{0}}{P \redd \mathbf{0}}	
	&
	\infer[\predrule]{\pred P \redd \num n}{P \redd \num{n+1}}	
	\\[1ex]
	\infer[\ifzzrule]{\ifterm P{Q}{Q'} \redd U_1}{P\redd \num{0} & Q\redd U_1 }
	&\qquad
	\infer[\ifzrule]{\ifterm P{Q}{Q'} \redd U_2}{P\redd \num{n+1} & Q'\redd U_2 }\qquad
	&
	\infer[\succrule]{\succ P \redd \num {n+1}}{P \redd \num{n}}	
	\\[1ex]
	\infer[\fixrule]{\fix P \redd U}{P \cdot (\fix P) \redd U}
	&
	\infer[\betarule]{P\cdot Q \redd U}{P\redd \lam x.P' & P'\subst{x}{Q}\redd U}
	\ear
\]\vspace{-10pt}
\caption{The big-step operational semantics of \PCF.}\label{fig:bigstepPCF}
\end{figure}

\begin{figure}[t]
\[
	\bar{ccccccc}
	\infer{\Gamma,x:\alpha\vdash x:\alpha}{}
	&&
	\infer{\Gamma\vdash\lambda x.P:\alpha\to \beta}{\Gamma,x:\alpha\vdash P:\beta}
	&&
	\infer{\Gamma\vdash\num 0:\tint}{}&&\infer{\Gamma\vdash PQ:\beta}{
  \Gamma\vdash P:\alpha\to \beta
  &
  \Gamma\vdash Q:\alpha
}\\[1ex]
  \infer{\Gamma\vdash\pred~P:\tint}{\Gamma\vdash P:\tint}.
  &&
  \infer{\Gamma\vdash \fix P:\alpha}{\Gamma\vdash P:\alpha\to \alpha}
  &&
\infer{\Gamma\vdash\succ~P:\tint}{\Gamma\vdash P:\tint}. 
  &&\infer{\Gamma\vdash \ifterm P{Q}{Q'}:\alpha}{
	\Gamma\vdash P:\tint
	&
	\Gamma\vdash Q:\alpha
	&
	\Gamma\vdash Q':\alpha
}\\
	\ear
\]
\caption{The type inference rules of \PCF.}\label{fig:typesPCF}
\end{figure}

Recall that any \PCF{} program $P$ can be seen as an $\EPCF$ term, thanks to the notation $\lam x.N := \lam x.\esubst{N}{}$.
However, the hypotheses $\vdash^\PCF P : \tint$ and $P\redd\num n$ are \emph{a priori} not sufficient for applying Corollary~\ref{cor:EPCFprograms}, since one needs to show that also the corresponding \EPCF{} judgments $\vdash^\E P : \tint$ and $P\reddd\num n$ hold.
The former is established by the following lemma.
\begin{lemma}\label{lem:typeequivalence} 
Let $M$ be a \PCF{} term, $\alpha\in\Types$ and $\Gamma$ be a context. Then
\[
	\Gamma\vdash^{\PCF} M : \alpha\qquad \Rightarrow\qquad \Gamma\vdash^\E M : \alpha
\]
\end{lemma}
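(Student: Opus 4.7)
The plan is to proceed by a straightforward induction on the derivation of $\Gamma \vdash^{\PCF} M : \alpha$, exploiting the fact that (via the notation $\lam x.N := \lam x.\esubst{N}{[\,]}$) every \PCF{} term is literally an \EPCF{} term in which every abstraction carries an empty explicit substitution. The typing rules of Figure~\ref{fig:typesPCF} and those of Figure~\ref{fig:typing} are in near bijective correspondence, so the induction amounts to showing that each \PCF{} rule can be mimicked by the analogous \EPCF{} rule.

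First I would dispatch the purely syntactic cases: axiom, $(0)$, $(+)$, $(-)$, $(\mathrm{ifz})$, $(\to_\mathrm{E})$ and $(\mathrm{Y})$. In each of these cases the \PCF{} and \EPCF{} rules have identical premises and conclusions (modulo the superscript on $\vdash$), so the induction hypothesis applied to each premise yields the required \EPCF{} derivation by a single application of the corresponding rule.

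The only case requiring a moment's attention is abstraction. Given a \PCF{} derivation ending with
\[
\infer{\Gamma \vdash^{\PCF} \lam x.P : \alpha \to \beta}{\Gamma, x:\alpha \vdash^{\PCF} P : \beta}
\]
the induction hypothesis provides $\Gamma, x:\alpha \vdash^\E P : \beta$. Since the \PCF{} abstraction $\lam x.P$ abbreviates the \EPCF{} abstraction $\lam x.\esubst{P}{[\,]}$, I apply rule $(\to_\mathrm{I})$ of Figure~\ref{fig:typing} with the empty extension $\Delta = \emptyset$, discharging the side premise $[\,] \models \emptyset$ by the axiom $(\sigma_0)$. This gives $\Gamma \vdash^\E \lam x.\esubst{P}{[\,]} : \alpha \to \beta$, as required.

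I do not expect a genuine obstacle: the substitution-predicate premise in $(\to_\mathrm{I})$ is trivially satisfied because \PCF{} abstractions carry no substitution, and the induction does not need to be simultaneously strengthened to cover $\sigma \models \Gamma$. Once the lemma is proved, combined with the equivalence of \PCF{} and \EPCF{} on terminating programs of type $\tint$ (Thm.~\ref{thm:delayed_equivalence}) and Corollary~\ref{cor:EPCFprograms}, one obtains the translation result for regular \PCF{} programs.
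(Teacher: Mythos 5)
Your proposal is correct and follows exactly the same route as the paper, which proves the lemma by a straightforward induction on the derivation of $\Gamma\vdash^{\PCF} M : \alpha$; the paper leaves the details implicit, and your treatment of the abstraction case via $(\to_\mathrm{I})$ with $\Delta=\emptyset$ and $(\sigma_0)$ is precisely the only point that needs spelling out.
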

\begin{proof} By a straightforward induction on a derivation of $\Gamma\vdash^{\PCF} M$.
\end{proof}

An \EPCF{} term is easily translated into \PCF{} by performing all its explicit substitutions. 
The converse is trickier as the representation is not unique: for every \PCF{} term $P$ there are several decompositions $P = P'[Q_1/x_1,\dots,Q_n/x_n]$. Recall that the size $\size{(\sigma,M)}$ has been defined in Definition~\ref{def:termlength}.

\begin{definition}\label{def:PCFsubstitution}
Let $M$ be an \EPCF{} term and $\sigma$ be an explicit substitution. Define a \PCF{} term $(\sigma, M)^*$ by induction on $\size{(\sigma,M)}$ as follows:
\[
	\bar{rcl}
	(\sigma, x)^*&=&\begin{cases}
		\sigma(x)^*,&\textrm{if }x\in\dom(\sigma),\\
		x,&\textrm{otherwise},
	\end{cases}\\
	(\sigma, \lambda x.\esubst{M}{\rho})^* &=& \lambda x.(\sigma+\rho, M)^*,	\\
	(\sigma, M\cdot N)^* &=& (\sigma, M)^* \cdot (\sigma, N)^*,\\	
	(\sigma, \fix M)^* &=& \fix ((\sigma, M)^*),\\
	(\sigma, \num 0)^* &=& \num 0,\\
	(\sigma, \pred M)^* &=& \pred (\sigma, M)^*,\\
	(\sigma, \succ M)^* &=& \succ (\sigma, M)^*,\\
	(\sigma, \ifterm L {M}{N})^* &=& \ifterm{ (\sigma, L)^*}{(\sigma, M)^*}{(\sigma, N)^*}.\\	
	\ear
\]
For a \PCF{} term $P$, define $P^\dagger = \set{(\sigma, M) \st (\sigma, M)^* = P}$.
\end{definition}

To show the equivalence between \PCF{} and \EPCF, we need yet another auxiliary lemma.
\begin{lemma}[Substitution Lemma]\label{app:lem:substbizarre}\ 
\bsub\item\label{app:lem:substbizarre1}
Let $M,N$ be \EPCF{} terms, $\sigma, \rho$ be explicit substitutions and $x$ be a variable.
\[
	(\sigma+[x \leftarrow(\rho, N)], M)^* =(\sigma,M)^*\subst{x}{(\rho, N)^*}
\]
\item\label{app:lem:substbizarre2} 
Let $P,Q$ be \PCF{} terms with $\FV{P}\subseteq\set{x}$ and $Q$ closed. 
For all \EPCF{} terms $M,N$ and explicit substitutions $\sigma,\rho$, we have:
\[
	(\sigma,M)\in P^\dagger \land (\rho,N)\in Q^\dagger\Rightarrow
	(\sigma + [x\leftarrow (\rho,N)],M)\in(P\subst{x}{Q})^\dagger
\]
\esub
\end{lemma}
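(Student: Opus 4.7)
The plan is to prove part (i) by induction on the size $\size{(\sigma,M)}$ from Definition~\ref{def:termlength}, and then derive part (ii) as a one-line corollary by unfolding the definition of $(\cdot)^\dagger$.

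For part (i), I would proceed by case analysis on the outermost constructor of $M$. The variable case $M = y$ breaks into three subcases, exploiting the disjointness assumption $\dom(\sigma) \cap \set{x} = \emptyset$ that is implicit in forming $\sigma + [x \leftarrow (\rho, N)]$: if $y = x$, both sides evaluate to $(\rho, N)^*$; if $y \in \dom(\sigma)$, the left-hand side is $\sigma(y)^*$ and the right-hand side is $\sigma(y)^*\subst{x}{(\rho, N)^*}$, which agree provided $x \notin \FV{\sigma(y)^*}$; if $y$ is neither of these, both sides yield $y$. The abstraction case $M = \lam y.\esubst{M'}{\tau}$ is handled by $\alpha$-renaming to make $y$ fresh for $x$, $\rho$ and $N$, so that unfolding the definition on both sides gives $\lam y.(\sigma + [x \leftarrow (\rho,N)] + \tau, M')^*$ versus $\lam y.((\sigma + \tau, M')^*\subst{x}{(\rho, N)^*})$; these coincide by the induction hypothesis applied to the strictly smaller pair $(\sigma + \tau, M')$, once one observes that $+$ can be reshuffled because the added binding for $x$ is fresh with respect to $\tau$. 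The remaining cases (application, $\fix$, numerals, $\pred$, $\succ$, $\ifterm{}{}{}$) are entirely routine: the translation $(\sigma, -)^*$ commutes with all these constructors, as does capture-free substitution, so each reduces immediately to the induction hypothesis applied to the direct subterms.

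For part (ii), the hypotheses $(\sigma, M) \in P^\dagger$ and $(\rho, N) \in Q^\dagger$ unfold by Definition~\ref{def:PCFsubstitution} to $(\sigma, M)^* = P$ and $(\rho, N)^* = Q$. Applying part (i) directly yields
\[
	(\sigma + [x \leftarrow (\rho, N)], M)^* \;=\; (\sigma, M)^*\subst{x}{(\rho, N)^*} \;=\; P\subst{x}{Q},
\]
which is precisely the statement $(\sigma + [x \leftarrow (\rho, N)], M) \in (P\subst{x}{Q})^\dagger$.

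The main obstacle, and the point where care is needed, is the second subcase of the variable case for (i): one must know that $x \notin \FV{\sigma(y)^*}$ for each $y \in \dom(\sigma)$. This is not purely formal from $x \notin \dom(\sigma)$; it relies on the invariant that every closure $(\sigma_i, M_i)$ stored inside $\sigma$ is \emph{self-contained}, so that $(\sigma_i, M_i)^*$ is a closed \PCF{} term. In the typed setting this invariant is given for free by the judgement $\sigma \models \Gamma$ from Figure~\ref{fig:typing} together with Lemma~\ref{lem:epcftyping}(\ref{lem:epcftyping2}); in the untyped setting I would make the invariant explicit and strengthen the induction statement accordingly, tracking that $\FV{(\sigma, M)^*} \subseteq \FV{M} \setminus \dom(\sigma)$. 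With that strengthening in place, the induction goes through uniformly and part (ii) falls out trivially.
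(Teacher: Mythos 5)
Your proof of part (i) is essentially the paper's: it also proceeds by induction on $M$ (structural rather than on $\size{(\sigma,M)}$, which comes to the same thing), splits the variable case into $y=x$, $y\in\dom(\sigma)$ and $y\notin\dom(\sigma)\cup\set{x}$, and discharges the middle subcase by exactly the condition $x\notin\FV{\sigma(y)}$ that you flag --- the paper covers this with the blanket remark that it ``relies on the freshness hypothesis on the variables in $\dom(\sigma)$'', whereas you make the underlying invariant $\FV{(\sigma,M)^*}\subseteq\FV{M}\setminus\dom(\sigma)$ explicit, which is a reasonable tightening rather than a gap in your argument. You are also slightly more careful in the abstraction case: the paper writes it for $M=\lam y.M'$ (i.e.\ empty inner substitution), while you handle a general $\lam y.\esubst{M'}{\tau}$ and justify the reshuffling of $+$ needed to put the pair into the form required by the induction hypothesis. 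The only genuine divergence is part (ii): the paper proves it ``by an easy induction on $P$, using (i)'', whereas you observe that, since $(\sigma,M)\in P^\dagger$ is by definition the equation $(\sigma,M)^*=P$, the claim is an immediate instance of (i) with no induction at all. Your shortcut is valid and cleaner; the paper's induction on $P$ buys nothing beyond re-verifying (i) case by case.
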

We rely on the freshness hypothesis on the variables in $\dom(\sigma)$.
\begin{proof} (i)
By structural induction on $M$. 
\begin{description}
   \item[Case $M = y$, with $y\neq x$:] There are two subcases. 
   \begin{itemize}
   \item
   If $y\in \dom(\sigma)$, then
$
(\sigma+[x \leftarrow(\rho,N)], y)^* = \sigma(y) = \sigma(y)\subst{x}{(\rho, N)^*}= (\sigma, y)^*\subst{x}{(\rho, N)^*}$, since $x\notin\FV{\sigma(y)}$;
   \item
	if $y\notin \dom(\sigma)$, then $
	(\sigma+[x \leftarrow(\rho,N)], y)^* = y 
	= y\subst{x}{(\rho, N)^*}
	= (\sigma, y)^*\subst{x}{(\rho, N)^*}$.
	\end{itemize}
\item[Case $M = x$:] Then 
$
(\sigma+[x \leftarrow(\rho, N)], x)^*  = (\rho, N)^*
=x\subst{x}{(\rho, N)^*}
=(\sigma, x)^*\subst{x}{(\rho, N)^*}.
$

\item[Case $M = \lambda y.M'$:] Wlog, we may assume $y\neq x$. We have
$
(\sigma+[x \leftarrow(\rho, N)], \lambda y.M')^*= \lambda y.(\sigma+[x \leftarrow(\rho, N)], M')^*
= \lambda y.((\sigma,M')^*\subst{x}{(\rho, N)^*})
=(\lambda y.(\sigma,M')^*)\subst{x}{(\rho, N)^*} 
=((\sigma,\lambda y.M')^*)\subst{x}{(\rho, N)^*}.
$
\end{description}
All other cases derive straightforwardly from the IH.

(ii) By an easy induction on $P$, using \eqref{app:lem:substbizarre1}.
\end{proof}

\begin{theorem}\label{thm:delayed_equivalence} 
The big-step weak reduction of \EPCF{} is equivalent to the usual big-step operational semantics of \PCF.  Formally:
\bsub
\item\label{thm:delayed_equivalence1} Given an \EPCF{} program $M$, a value $V$ and an explicit substitution $\sigma$, we have:
\[
	\substseq{\sigma}{ M} \reddd V \imp
	(\sigma, M)^* \redd ([\,], V)^*
\]
\item\label{thm:delayed_equivalence2} Given a \PCF{} program $P$ and \PCF{} value $U$. If $P \redd U$ then
\[
	  \forall (\sigma, M) \in P^\dagger,\, \exists V \in \Val\,.\,(\ \substseq{\sigma}{M}\reddd V \textrm{ and } ([\,], V)\in U^\dagger \ )
\]
\esub
\begin{proof}(Proof sketch) For the full proof, we refer to the technical Appendix~\ref{app:tech}.

(i) Proceed by induction on a derivation of $\substseq{\sigma}{ M} \reddd V$, using  Lemma~\ref{app:lem:substbizarre}\eqref{app:lem:substbizarre1} in the $(\beta_v)$-case.

(ii) By induction on the lexicographically ordered pairs, whose first component is the length of a derivation of $P \redd U$ and second component is $\size{(\sigma,M)}$, using  Lemma~\ref{app:lem:substbizarre}\eqref{app:lem:substbizarre2} in the $(\beta_v)$-case.
\end{proof}
\end{theorem}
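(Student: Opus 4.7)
The plan is to prove the two parts separately by induction, as the authors indicate, but with the details developed concretely for each rule.

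For part (i), I would induct on the height of a derivation of $\substseq{\sigma}{M} \reddd V$, case-splitting on the last rule applied. The cases $(\valrule)$ and $(\funrule)$ are immediate since $(\sigma, \num n)^* = \num n$ and $(\sigma, \lambda x.\esubst{M}{\rho})^* = \lambda x.(\sigma{+}\rho, M)^* = ([\,],\lambda x.\esubst{M}{\sigma{+}\rho})^*$, so the translated terms are already \PCF{} values and reduce to themselves by $(\PCFvalrule)$. The $(\varrule)$ case follows at once from the IH and the defining clause $(\sigma, x)^* = \sigma(x)^*$. The $(\fixrule)$, arithmetic, and conditional cases are direct applications of the IH, using that $(-)^*$ commutes with the respective outer constructors. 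The crucial case is $(\betarule)$: from $\substseq{\sigma}{M}\reddd \esubst{\lambda x.M'}{\rho}$ and $\substseq{\rho+[x\leftarrow(\sigma, N)]}{M'}\reddd V$, two invocations of the IH give $(\sigma, M)^* \redd \lambda x.(\rho, M')^*$ and $(\rho+[x\leftarrow(\sigma, N)], M')^* \redd ([\,], V)^*$; Lemma~\ref{app:lem:substbizarre}\eqref{app:lem:substbizarre1} rewrites the left-hand side of the second reduction as $(\rho, M')^*\subst{x}{(\sigma, N)^*}$, and the \PCF{} rule $(\betarule)$ concludes.

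For part (ii), I would induct on lexicographically ordered pairs $(|\mathcal{D}|, \size{(\sigma, M)})$, where $\mathcal{D}$ derives $P \redd U$, and case-split on $M$ rather than on the last rule of $\mathcal{D}$. If $M = y$ is a variable, then $\sigma(y) = (\rho, N)$ with $(\rho, N)^* = P$, so $(\rho, N) \in P^\dagger$; since $\size{(\rho, N)} < \size{(\sigma, y)}$, the IH applies with the same derivation $\mathcal{D}$, yielding $\substseq{\rho}{N}\reddd V$ with $([\,], V) \in U^\dagger$, and $(\varrule)$ lifts this to $\substseq{\sigma}{y}\reddd V$. If $M$ is not a variable, its outer constructor dictates the shape of $P$ up to $(-)^*$: the non-$(\betarule)$ cases follow routinely from the IH applied to the matching sub-pairs, and the $(\fixrule)$ case uses the strict decrease in derivation length from $P \redd U$ to $P'\cdot\fix P'\redd U$. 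The $(\betarule)$ case is again crucial: writing $M = M_1\cdot M_2$ with $(\sigma, M_i)^* = P_i$, the IH applied to $P_1 \redd \lambda x.P'$ gives $\substseq{\sigma}{M_1}\reddd \esubst{\lambda x.M'}{\rho}$ with $(\rho, M')^* = P'$; Lemma~\ref{app:lem:substbizarre}\eqref{app:lem:substbizarre2} then produces $(\rho+[x\leftarrow(\sigma, M_2)], M') \in (P'\subst{x}{P_2})^\dagger$, and a second IH invocation on the smaller-height premise $P'\subst{x}{P_2}\redd U$ yields the required $V$ before $(\betarule)$ concludes.

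The main difficulty I anticipate is the bookkeeping in part (ii): inverting $(-)^*$ at the variable case genuinely forces the lexicographic induction rather than a plain induction on $|\mathcal{D}|$, and the $(\betarule)$ case hinges on applying the Substitution Lemma with the correct freshness and closedness hypotheses---the variables of $\dom(\sigma)$ must remain fresh, and each subterm playing the role of $Q$ in Lemma~\ref{app:lem:substbizarre}\eqref{app:lem:substbizarre2} must be closed. Since the statement concerns \PCF{} programs, these invariants hold globally, but they must be threaded carefully through every recursive call; the $(\funrule)$ case in both parts additionally requires $\alpha$-conversion to keep bound variables disjoint from $\dom(\sigma)$.
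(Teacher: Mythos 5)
Your proposal is correct and follows essentially the same route as the paper's proof in Appendix~A: part (i) by induction on the derivation of $\substseq{\sigma}{M}\reddd V$ with Lemma~\ref{app:lem:substbizarre}\eqref{app:lem:substbizarre1} in the $(\beta_v)$-case, and part (ii) by the same lexicographic induction on (derivation length, $\size{(\sigma,M)}$), peeling off the variable case first and invoking Lemma~\ref{app:lem:substbizarre}\eqref{app:lem:substbizarre2} in the $(\beta_v)$-case. The bookkeeping concerns you flag (freshness of $\dom(\sigma)$, closedness of the substituted subterms, $\alpha$-renaming in the abstraction cases) are exactly the invariants the paper relies on as well.
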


As promised, we now draw conclusions for the regular \PCF.
As customary in \PCF, we are interested on the properties of closed terms having ground type.

\begin{theorem}\label{thm:main}
For a \PCF{} program $P$ of type $\tint$, $P \redd \num n$ entails $\trans{P}{\tint}{}\reddh \mach{n}$.
\end{theorem}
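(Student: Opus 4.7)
The plan is to assemble the three main results already proved: Lemma~\ref{lem:typeequivalence}, Theorem~\ref{thm:delayed_equivalence}\eqref{thm:delayed_equivalence2}, and Corollary~\ref{cor:EPCFprograms}. The key observation is that every \PCF{} term $P$ can be viewed canonically as an \EPCF{} term by regarding every abstraction $\lam x.Q$ as $\lam x.\esubst{Q}{[\,]}$; with this convention one has, trivially, $([\,], P)\in P^\dagger$ and moreover $([\,], P)^* = P$.

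First I would use Lemma~\ref{lem:typeequivalence} to transfer the typing hypothesis: from $\vdash^\PCF P : \tint$ one obtains $\vdash^\E P : \tint$, so that $P$ is a well-typed \EPCF{} program of type $\tint$.

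Next I would invoke Theorem~\ref{thm:delayed_equivalence}\eqref{thm:delayed_equivalence2} on $P \redd \num n$, applied to the pair $([\,], P)\in P^\dagger$. This yields some value $V\in\Val$ satisfying $\substseq{[\,]}{P}\reddd V$ together with $([\,], V)\in(\num n)^\dagger$, i.e.\ $([\,],V)^* = \num n$. A short case analysis on the shape of $V$ using Definition~\ref{def:PCFsubstitution} forces $V = \num n$: a value is either a numeral or a closure $\lam x.\esubst{M}{\rho}$, and in the latter case $([\,], V)^*$ would be an abstraction, contradicting $([\,], V)^* = \num n$. Hence $\substseq{[\,]}{P}\reddd \num n$ holds in \EPCF.

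Finally, I apply Corollary~\ref{cor:EPCFprograms} to the \EPCF{} program $P$ (well-typed with $\tint$ by the first step, reducing to $\num n$ by the second), concluding $\trans{P}{}{}\reddh\mach n$, which is the desired statement since $\trans{P}{\tint}{}$ and $\trans{P}{}{}$ denote the same machine. The only delicate step is the case analysis showing that $([\,],V)\in(\num n)^\dagger$ implies $V=\num n$; everything else is a direct chaining of the preceding results, so I do not anticipate any genuine obstacle.
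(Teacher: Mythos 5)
Your proof is correct and follows essentially the same route as the paper's: view $P$ as an \EPCF{} term with $([\,],P)\in P^\dagger$, transfer typability via Lemma~\ref{lem:typeequivalence}, obtain $\substseq{[\,]}{P}\reddd \num n$ from Theorem~\ref{thm:delayed_equivalence}\eqref{thm:delayed_equivalence2}, and conclude with Corollary~\ref{cor:EPCFprograms}. Your explicit case analysis showing that $([\,],V)\in(\num n)^\dagger$ forces $V=\num n$ is a small step the paper leaves implicit, and it is carried out correctly.
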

\begin{proof} Note that $P$ is also an $\EPCF$ term such that $([],P)\in P^\dagger$, and that $\vdash^\E P : \tint$ by Lemma~\ref{lem:typeequivalence}.
Thus $[\,]\triangleright\, P\reddd \num n$ by Theorem~\ref{thm:delayed_equivalence}\eqref{thm:delayed_equivalence2}.
Conclude by Corollary~\ref{cor:EPCFprograms}.
\end{proof}


\bibliographystyle{entics}

\bibliography{include/biblio}
\appendix
\section{Technical Appendix}\label{app:tech}

This technical appendix is devoted to provide the proofs that have been partially given, or completely omitted, in the body of the paper.
As an abbreviation, we write IH for ``induction hypothesis''.

\subsection{Proofs of Section~\ref{sec:EPCF}}

\begin{proof}[Proof of Lemma~\ref{lem:epcftyping}] Items \eqref{lem:epcftyping1} and \eqref{lem:epcftyping2} are straightforward. We prove \eqref{lem:epcftyping4}.

Given an \EPCF{} term $M$, an \EPCF{} value $V$, an explicit substitution $\sigma$, a type $\alpha$, and a context $\Gamma$ such that $\substseq{\sigma}M \reddd V, \sigma \models \Gamma, \Gamma \vdash M : \alpha$, we prove by induction on a derivation of $\substseq{\sigma}M \reddd V$ that $\vdash V : \alpha$.

\begin{description}
\item [Case $\valrule$:] In this case $M = V = \num n$, for $ \num n \in \nat$, and $\alpha = \tint$. By the typing rules of \EPCF{}, $\vdash \num n : \tint$.

\item [Case $\funrule$:] In this case $M = \lam x.\esubst{M'}{\rho}, V = \lam x.\esubst{M'}{\sigma + \rho}, \alpha = \beta\rightarrow\gamma$. As $\Gamma \vdash \lam x.\esubst{M'}{\rho} : \beta\rightarrow\gamma$, $\exists! \Delta$ such that $\rho \models \Delta, \Gamma, \Delta, x:\beta \vdash M' : \gamma$. As $\sigma \models \Gamma$ and $\rho \models \Delta$, $\sigma + \rho \models \Gamma, \Delta$. Thus by the typing rules of \EPCF{}, $\vdash \lam x.\esubst{M'}{\sigma + \rho} : \beta\rightarrow\gamma$.

\item [Case $\varrule$:] In this case $M = x, \sigma(x) = (\rho, N), \Gamma = \Gamma', x:\alpha$. By the operational semantics of \EPCF{}, $\substseq{\rho}{N}\reddd V$, and by the type system of \EPCF{} as $\sigma \models \Gamma', x:\alpha$, $[x\leftarrow (\rho, N)] \models x: \alpha$ and thus $\rho \models \Delta, \Delta \vdash N : \alpha$. By IH, $\vdash V : \alpha$.

\item [Case $\betarule$:] In this case $M = N \cdot L$. By the operational semantics of \EPCF{}, $\substseq{\sigma}{N}\reddd \lam x.\esubst{N'}{\rho}$ and $\substseq{\rho + [x \leftarrow (\sigma, L)]}{N'} \reddd V$. By the type system of \EPCF{}, $\Gamma \vdash N : \beta\rightarrow\alpha$, $\Gamma \vdash L : \beta$. By IH, $\vdash \lam x.\esubst{N'}{\rho} : \beta\rightarrow\alpha$, and then by the type system of \EPCF{} $\rho \models \Delta, \Delta, x:\beta \vdash N' : \alpha$. By the type system we also have $[x\leftarrow(\sigma, L)]\models x:\beta$, so $\rho + [x\leftarrow(\sigma, L)]\models\Delta,x:\beta$, and thus by IH we conclude $\vdash V:\alpha$.
\end{description}
All other cases derive straightforwardly from applying the rules of the type system and the IH.
\end{proof}

\subsection{Proofs of Section~\ref{sec:trans}}

\begin{proof}[Proof of Theorem~\ref{thm:typabilitytransfers}] We prove the following statements by mutual induction and call the respective inductive hypotheses IH1 and IH2.

\begin{enumerate}[(i)]
\item\label{thm:typabilitytransfers1} 
Let $M$ be an \EPCF{} term, $\Gamma = x_1:\beta_1,\dots,x_n:\beta_n$ and $\alpha\in\Types$. Then
$
	\Gamma \vdash M : \alpha \quad \Rightarrow\quad \vdash\trans{M}{}{x_1,\dots,x_n} : \beta_1\to\cdots\to\beta_n\to\alpha.
$

\item\label{thm:typabilitytransfers2} 
For all $\Gamma = x_1:\beta_1,\dots,x_n:\beta_n$ and $\sigma$, with $\dom(\sigma)  = \set{x_1,\dots,x_n}$, we have 
$
	\sigma\models \Gamma \quad \Rightarrow \quad \vdash \etrans{\sigma(x_1)}{}{} : \beta_1 ,\dots,\vdash \etrans{\sigma(x_n)}{}{} : \beta_n.
$
\end{enumerate}
We start with the cases concerning \ref{thm:typabilitytransfers1}.
\begin{description}
\item[Case $(\mathrm{ax})$:] Then, $ x_1:\beta_1,\dots,x_n : \beta_n \vdash x_i : \beta_i$, and $\trans{x_i}{}{x_1,\dots,x_n} = \mProj{n}{i}.$
By Lemma~\ref{lem:welltypedeams}\eqref{lem:welltypedeams1} we conclude $\vdash \mProj{n}{i} : \beta_1\to\dots\to\beta_n\to\beta_i$.\medskip
\item[Case $(0)$:] In this case, we have $\Gamma \vdash \mathbf{0} : \tint$ and 
$
	\trans{\mathbf{0}}{}{x_1,\dots,x_n} = \appT{\mProj{n+1}{1}}{[0]}.
$
By Lemma~\ref{lem:welltypedeams}\eqref{lem:welltypedeams1} $\vdash \appT{\mProj{n+1}{1}} : \tint\to\beta_1\to\dots\to\beta_n\to\tint$ and, by Figure~\ref{fig:eamstyping} $\vdash\Lookinv{0}:\tint$.
By Proposition~\ref{prop:typing}\eqref{prop:typing1}, $\vdash\appT{\mProj{n+1}{1}}{[{0}]}: \beta_1\to\dots\to\beta_n\to\tint$.\medskip

\item[Case $(\mathrm{Y})$:] In this case $\Gamma \vdash \fix M' : \alpha$ and $
	\trans{\fix M'}{}{\vec x} = \appT{\mYn{n}}{[\Lookup{\trans{M'}{}{\vec x}}]}$. 
By $(\mathrm{fix}_n)$ in Figure~\ref{fig:eamstyping}, we have $\vdash \mYn{n} : (\vec \beta\to \alpha\to\alpha)\to\vec\beta\to\alpha$. 
From the hypothesis IH1, we obtain $\vdash \trans{M'}{}{x_1,\dots,x_n} : \vec\beta \to \alpha\to\alpha$. 
By Proposition~\ref{prop:typing}\eqref{prop:typing1}, we conclude that $\vdash\appT{\mYn{n}}{[\Lookup{\trans{M'}{}{x_1,\dots,x_n}}]} : \beta_1\to\dots\to\beta_n\to\alpha$.
\medskip
\item[Case $(+)$:] In this case $\Gamma \vdash \succ {M'} : \tint$ since $\Gamma \vdash {M'} : \tint$. 
By definition, $\trans{\succ M'}{}{x_1,\dots,x_n} = \appT{\mSuccn{n}}{[\Lookup\trans{M'}{}{x_1,\dots,x_n}]}$. 
By Lemma~\ref{lem:welltypedeams}\eqref{lem:welltypedeams4}, $\vdash \mSuccn{n} : (\vec\beta \to \tint)\to\vec\beta\to \tint$. 
From IH1, we get $\vdash \trans{M'}{}{x_1,\dots,x_n} : \vec\beta\to \tint$, and thus by Proposition~\ref{prop:typing}\eqref{prop:typing1}, we conclude 
$
	\vdash\appT{\mSuccn{n}}{[\Lookup{\trans{M'}{}{x_1,\dots,x_n}}]} : \vec\beta\to\tint.
$

\item[Case $(-)$:] Analogous, applying Lemma~\ref{lem:welltypedeams}\eqref{lem:welltypedeams4}.\medskip

\item[Case $(\mathrm{ifz})$:] Assume $\Gamma\vdash \ifterm{L}{N_1}{N_2} : \alpha$ since $\Gamma \vdash L : \tint$ and $\Gamma \vdash N_i : \alpha$, for $i\in\set{1,2}$. 
By definition of the translation, we have 
$
	\trans{\ifterm L{N_2}{N_2}}{}{\vec x} = \appT{\mIfZn{n}}{[\Lookup\trans{L}{}{\vec x},\Lookup\trans{N_1}{}{\vec x},\Lookup\trans{N_2}{}{\vec x}]}.
$ 
By applying Lemma~\ref{lem:welltypedeams}\eqref{lem:welltypedeams5}, we obtain $\vdash \mIfZn{n} : (\vec\beta\to \tint)\to(\vec\beta\to \alpha)\to(\vec\beta\to \alpha)\to\vec\beta\to\alpha$. 
By the hypothesis IH1, we get $\vdash \trans{L}{}{\vec x} : \vec\beta\to\tint$ and $\vdash \trans{N_i}{}{\vec x} : \vec\beta \to \alpha$ for $i\in\set{1,2}$, and thus by Proposition~\ref{prop:typing}\eqref{prop:typing1}, we conclude $\vdash\appT{\mIfZn{n}}{[\Lookup\trans{L}{}{\vec x},\Lookup\trans{N_1}{}{\vec x},\Lookup\trans{N_2}{}{\vec x}]} : \vec\beta\to\alpha$.\medskip

\item[Case $(\to_\mathrm{I})$:]  Assume that $\Gamma\vdash \lam z.\esubst{M'}{\sigma} : \alpha_1\to\alpha_2$, for $\alpha = \alpha_1\to\alpha_2$, because there is $\Delta = y_1:\delta_1,\dots,y_m : \delta_m$ such that $\sigma\models \Delta$ and $\Gamma,\Delta,z:\alpha_1\vdash M' : \alpha_2$. 
Then $\sigma\models \Delta$ entails $\sigma = [y_1\leftarrow(\rho_1,N_1),\dots,y_m\leftarrow(\rho_m,N_m)]$ for appropriate $\vec \rho,\vec N$.
By definition, we have
$
	\trans{\lam z.\esubst{M'}{\sigma}}{}{x_1,\dots ,x_n}
	 = \etrans{\sigma, M'}{}{\vec x, z}
	 = \appT{\trans{M'}{}{y_1,\dots,y_m,\vec x, z}}{[\Lookup\etrans{\rho_1,N_1}{}{},\dots,\Lookup\etrans{\rho_m,N_m}{}{}]}. 
$
By applying IH1, we obtain $\vdash\trans{M'}{}{\vec y,\vec x, z}: \vec\delta\to\vec\beta\to\alpha_1\to\alpha_2$. 
From IH2, we get $\vdash \etrans{\sigma_1,N_1}{}{}:\delta_1 \dots \vdash \etrans{\sigma_m,N_m}{}{} : \delta_m$.
Finally, by Proposition~\ref{prop:typing}\eqref{prop:typing1}, we derive $\vdash \appT{\trans{M'}{}{\vec y,\vec x, z}}{[\Lookup\etrans{\sigma_1,N_1}{}{},\dots,\Lookup\etrans{\sigma_m,N_m}{}{}]} : \vec \beta\to\alpha_1\to\alpha_2$.
\item[Case $(\to_\mathrm{E})$:] In this case $\Gamma\vdash M_1\cdot M_2 : \alpha$ since, for some $\delta\in\Types$, $\Gamma\vdash M_1 : \delta\to\alpha$ and $\Gamma \vdash M_2 : \delta$.
By definition, 
$
	\trans{M_1\cdot M_2}{}{\vec x} = \appT{\mAppn{n}}{[\Lookup\trans{M_1}{}{\vec x}, \Lookup\trans{M_2}{}{\vec x}]}. 
$
By Lemma~\ref{lem:welltypedeams}\eqref{lem:welltypedeams2}, we get $\vdash \mAppn{n} : (\vec\beta\to\delta\to\alpha)\to(\vec\beta\to\delta)\to\vec\beta\to\alpha$. By IH1, we obtain $\vdash \trans{M_1}{}{\vec x} : \vec\beta\to\delta\to\alpha$ and $\vdash \trans{M_2}{}{\vec x} : \vec\beta\to\delta$. 
Conclude, by Proposition~\ref{prop:typing}\eqref{prop:typing1}, that $\vdash\appT{\mAppn{n}}{[\Lookup\trans{M_1}{}{\vec x}, \Lookup\trans{M_2}{}{\vec x}]} : \vec\beta\to\alpha$.
\end{description}
We now consider the cases concerning \ref{thm:typabilitytransfers2}.
\begin{description}
\item[Case $(\sigma_0)$]: In this case $[\,] \models \emptyset$, so we have nothing to prove.\medskip
\item[Case $(\sigma)$:] In this case $\Gamma = \Gamma',x_n:\beta_n$ and $\sigma = \sigma' + [x_n\leftarrow(\rho,N)]\models \Gamma',x_n:\beta_n$, 
because $\sigma'\models \Gamma'$, $\rho\models \Delta$ and $\Delta\vdash N :\beta_n$, for some $\Delta = y_1 : \delta_1,\dots,y_m : \delta_m$.
By IH2 on $\sigma'\models \Gamma'$, we get $\etrans{\sigma(x_i)}{}{} = \etrans{\sigma'(x_i)}{}{} : \beta_i$, for all $i\in\set{1,\dots,n-1}$. 
We show $\etrans{(\rho,N)}{}{} : \beta_n$.
By IH1, we get $\vdash \trans{N}{}{y_1,\dots,y_m} : \delta_1\to\cdots\to\delta_m\to\beta_n$.
By applying IH2 on $\rho\models \Delta$, we get $\etrans{\rho(y_j)}{}{} : \delta_j$, for all $j \in \{1,\dots, m\}$.
Since $\etrans{\rho, N}{}{} = \appT{\trans{N}{}{\vec y}}{[\Lookup\etrans{\rho(y_1)}{}{},\dots,\Lookup\etrans{\rho(y_m)}{}{}]}$, we conclude $\vdash \etrans{\rho, N}{}{} : \beta_n$, by Proposition~\ref{prop:typing}\eqref{prop:typing1}.
\end{description}
\end{proof}

\begin{proof}[Proof of Theorem~\ref{thm:equivalence}] We prove
$
	\substseq{\sigma}{M} \reddd V \Rightarrow \etrans{\sigma, M}{}{} \convh \trans{V}{}{}
$
by induction on a derivation of $\substseq{\sigma}{M} \reddd V$. We let $\dom(\sigma) = \set{x_1,\dots,x_n}$ and sometimes use the convenient notation 
$
	\Lookup\etrans{\sigma(\vec x)}{}{} = \Lookup\etrans{\sigma(x_1)}{}{},\dots, \Lookup\etrans{\sigma(x_n)}{}{}. 
$
\begin{description}
\item[Case $\valrule$:] Then $M =V=\num k$, for some $k\ge 0$. 
Recall that we assume $k = \Lookup\mach{k}$. Then using Lemma~\ref{lem:existenceofeams}\eqref{lem:existenceofeams1}, 
$
	\etrans{\sigma,\num k}{}{}=\appT{\trans{\num k}{}{x_1,\dots,x_n}}{[\Lookup\etrans{\sigma(\vec x)}{}{}]}=\appT{\mProj{n+1}{1}}{[k,\Lookup\etrans{\sigma(\vec x)}{}{}]}
	\reddh \Lookinv{k} {}_\mach{c}\!\!\twoheadleftarrow \appT{\mProj{1}{1}}{[k]} = \trans{\num k}{}{}
$
\item[Case 2: $\funrule$] We have $M = \lam z.\esubst{M'}{\rho}$ and $ V = \lam x.\esubst{M'}{\sigma+\rho}$, with $\dom(\sigma)\cap\dom(\rho) = \emptyset$.
Say, $\dom(\rho) = \set{y_1,\dots,y_m}$. Then
\begin{align*}
\etrans{\sigma, \lam z.\esubst{M'}{\rho}}{}{}&=\appT{\trans{\lam z.\esubst{M'}{\rho}}{}{x_1,\dots,x_n}}{[\Lookup\etrans{\sigma(\vec x)}{}{}]}=\appT{\etrans{\rho, M'}{}{\vec x,z}}{[\Lookup\etrans{\sigma(\vec x)}{}{}]}\\
&=\appT{\trans{M'}{}{\vec y,\vec x,z}}{[\Lookup\etrans{\rho(\vec y)}{}{},\Lookup\etrans{\sigma(\vec x)}{}{}]}=\etrans{\sigma + \rho, M'}{}{z}=\trans{\lam z.\esubst{M'}{\sigma+\rho}}{}{}
\end{align*}
The case follows by reflexivity of $\convh$.\medskip
\item[Case $\varrule$:] We have $M = x_i$, $\sigma(x_i) = (\rho,N)$ and $\substseq{\rho}{N\reddd V}$.
Then, using Lemma~\ref{lem:existenceofeams}\eqref{lem:existenceofeams1}, we have
$
\etrans{\sigma, x_i}{}{} = \appT{\trans{x_i}{}{x_1,\dots,x_n}}{[\Lookup\etrans{\sigma(\vec x)}{}{}]}
= \appT{\mProj{n}{i}}{[\Lookup\etrans{\sigma(\vec x)}{}{}]}
\reddh \etrans{\sigma(x_i)}{}{} = \etrans{\rho, N}{}{}
$
We conclude since, by IH, we have $\etrans{\rho, N}{}{}\convh \trans{V}{}{}$.\medskip
\item[Case $\betarule$:] $M = M_1\cdot M_2$, $\substseq{\sigma}{M_1} \reddd \lam z.\esubst{M_1'}{\rho}$ and $\substseq{\rho + [z\leftarrow(\sigma, M_2)]}{M_1'}\reddd V$.
Easy calculations give:
\begin{align*}
&\etrans{\sigma, M_1 \cdot M_2}{}{}\\
=& \appT{\trans{M_1\cdot M_2}{}{x_1,\dots,x_n}}{[\Lookup\etrans{\sigma(x_1)}{}{},\dots, \Lookup\etrans{\sigma(x_n)}{}{}]}\\
=&\appT{\mAppn{n}}{[\trans{M_1}{}{\vec x},\Lookup\trans{M_2}{}{\vec x},\Lookup\etrans{\sigma(x_1)}{}{},\dots, \Lookup\etrans{\sigma(x_n)}{}{}]}\\
\reddh& \appT{\trans{M_1}{}{\vec x}}{[\Lookup\etrans{\sigma(\vec x)}{}{},\Lookup{(\appT{\trans{M_2}{}{\vec x}}{[\Lookup\etrans{\sigma(\vec x)}{}{}]})}]},&\textrm{by Lemma~\ref{lem:existenceofeams}\eqref{lem:existenceofeams2}},\\
=& \appT{\etrans{\sigma, M_1}{}{}}{[\Lookup{\etrans{\sigma, M_2}{}{}}]}
\end{align*}
By IH on $\substseq{\sigma}{M_1} \reddd \lam z.\esubst{M_1'}{\rho}$, we have $\etrans{\sigma, M_1}{}{} \convh \trans{\lam z.\esubst{M_1'}{\rho}}{}{} = \etrans{\rho, M_1'}{}{z}$. Calling $\dom(\rho) = \set{y_1,\dots,y_m}$, we get
\begin{align*}
\appT{\etrans{\sigma, M_1}{}{}}{[\Lookup{\etrans{\sigma, M_2}{}{}}]}
&\convh \appT{\etrans{\rho, M_1'}{}{z}}{[\Lookup{\etrans{\sigma, M_2}{}{}}]},& \textrm{by def.\ of $\convh$ and Lemma~\ref{lem:reduction}},\\
&=\appT{\trans{M_1'}{}{y_1,\dots,y_m, z}}{[\Lookup\etrans{\rho(\vec y)}{}{}, \Lookup{\etrans{\sigma, M_2}{}{}}]}\\
&=\etrans{\rho + [z\leftarrow(\sigma, M_2)], M_1'}{}{}
\end{align*}
By IH conclude $\etrans{\rho + [z\leftarrow(\sigma, M_2)], M_1'}{}{} \convh \trans{V}{}{}$. 
\item[Case $\predrule$:] $M = \pred M'$, $V = \num n$ and $\substseq{\sigma}{M' \reddd\num{n+1}}$, for some $n\ge 0$.
By IH we get $\etrans{\sigma, M'}{}{}\convh \trans{\num {n+1}}{}{} = \Lookinv{n+1}$ and, since the $(n+1)$-th numeral machine is in final state, we derive $\etrans{\sigma, M'}{}{}\reddh \Lookinv{n+1}$. 
Conclude as follows:
\[
\begin{array}{rll}
\etrans{\sigma, \pred M'}{}{}
=& \appT{\trans{\pred M'}{}{x_1,\dots,x_n}}{[\Lookup\etrans{\sigma(x_1)}{}{},\dots, \Lookup\etrans{\sigma(x_n)}{}{}]}\\
=& \appT{\mPredn{n}}{[\Lookup\trans{M'}{}{x_1,\dots,x_n},\Lookup\etrans{\sigma(x_1)}{}{},\dots, \Lookup\etrans{\sigma(x_n)}{}{}]}\\
\reddh& \tuple{R_0 = \Lookup(\appT{\trans{M'}{}{\vec x}}{[\Lookup\etrans{\sigma(\vec x)}{}{}]}),\vec R,\Pred00;\Call 0,[]},&\textrm{by Lemma~\ref{lem:existenceofeams}\eqref{lem:existenceofeams3},}\\
=& \tuple{R_0 = \Lookup\etrans{\sigma, M'}{}{},\vec R,;\Pred00;\Call 0,[]}\\
\reddh& \tuple{R_0 = n+1,\vec R,\Pred00;\Call 0,[]}\reddh \Lookinv{n}
\end{array}
\]	
\item[Case $\predzrule$:] Analogous to the previous case, using the fact that $\ins{Pred}(0)$ is 0.\medskip

\item[Case 7: $\succrule$] $M = \succ M'$, $V = \num {n+1}$ and $\substseq{\sigma}{M' \reddd\num n}$, for some $n\ge 0$.
By IH we get $\etrans{\sigma, M'}{}{}\convh \trans{\num {n}}{}{} = \Lookinv{n}$ and, since the $n$-th numeral machine is in final state, we derive $\etrans{\sigma, M'}{}{}\reddh \Lookinv{n}$. 
Conclude as follows:
\[
\begin{array}{rlll}
\etrans{\sigma, \succ M'}{}{}&=& \appT{\trans{\succ M'}{}{x_1,\dots,x_n}}{[\Lookup\etrans{\sigma(x_1)}{}{},\dots, \Lookup\etrans{\sigma(x_n)}{}{}]}\\
&=& \appT{\mSuccn{n}}{[\Lookup\trans{M'}{}{x_1,\dots,x_n},\Lookup\etrans{\sigma(x_1)}{}{},\dots, \Lookup\etrans{\sigma(x_n)}{}{}]}\\
&\reddh& \tuple{R_0 = \Lookup(\appT{\trans{M'}{}{\vec x}}{[\Lookup\etrans{\sigma(\vec x)}{}{}]}),\vec R,\Succ00;\Call 0,[]},&\textrm{by Lemma~\ref{lem:existenceofeams}\eqref{lem:existenceofeams3},}\\
&=& \tuple{R_0 = \Lookup\etrans{\sigma, M'}{}{},\vec R,;\Succ00;\Call 0,[]}\\
&\reddh& \tuple{R_0 = n,\vec R,\Succ00;\Call 0,[]}\reddh& \Lookinv{n + 1}\\
\end{array}
\]	

\item[Case 8: $\ifzrule$] $M = \ifterm L {N_1} {N_2}$, $\substseq{\sigma}{L}\reddd \num {n+1}$, $\substseq{\sigma}{N_2} \reddd V$ for some $n \ge 0$. By IH we get $\etrans{\sigma, L}{}{}\convh\trans{\num {n+1}}{}{}$, and since the $(n+1)$-th numeral machine is in final state, we derive $\etrans{\sigma, L}{}{}\reddh \Lookinv{n+1}$. Then
\[
\begin{array}{lcl}
\etrans{\sigma, \ifterm L {N_1}{N_2}}{}{}&=& \appT{\trans{\ifterm L{N_1}{N_2}}{}{x_1,\dots,x_n}}{[\Lookup\etrans{\sigma(x_1)}{}{},\dots, \Lookup\etrans{\sigma(x_n)}{}{}]}\\
&=& \appT{\mIfZn{n}}{[\Lookup\trans{L}{}{x_1,\dots,x_n},\Lookup\trans{N_1}{}{x_1,\dots,x_n},\Lookup\trans{N_2}{}{x_1,\dots,x_n}, \Lookup\etrans{\sigma(x_1)}{}{},\dots, \Lookup\etrans{\sigma(x_n)}{}{}]}\\
&\reddh& \Tuple{R_0 = \Lookup\etrans{\sigma, L}{}{},R_1 = \Lookup\etrans{\sigma,N_1}{}{},R_2 = \Lookup\etrans{\sigma,N_2}{}{},\vec R, \Ifz 0120;\Call 0,[]},\\
&&\textrm{by Lemma~\ref{lem:existenceofeams}(\ref{lem:existenceofeams5}),}\\
&\reddh& \Tuple{R_0 = {n+1},R_1 = \Lookup\etrans{\sigma,N_1}{}{},R_2 = \Lookup\etrans{\sigma,N_2}{}{},\vec R, \Ifz 0120;\Call 0,[]}\\
&\reddh& \etrans{\sigma,N_2}{}{}
\end{array}
\]
We conclude since, by IH, $\etrans{\sigma, N_2}{}{}\convh\trans{V}{}{}$.

\item[Case 9: $\ifzzrule$] $M = \ifterm L {N_1} {N_2}$, $\substseq{\sigma}{L}\reddd \num {0}$, $\substseq{\sigma}{N_1} \reddd V$. By IH we get $\etrans{\sigma, L}{}{}\convh\trans{\num {0}}{}{}$, and since the $(n+1)$-th numeral machine is in final state, we derive $\etrans{\sigma, L}{}{}\reddh \Lookinv{0}$. Then
\[
\bar{rl}
\etrans{\sigma, \ifterm L {N_1}{N_2}}{}{}
=& \appT{\trans{\ifterm L{N_1}{N_2}}{}{x_1,\dots,x_n}}{[\Lookup\etrans{\sigma(x_1)}{}{},\dots, \Lookup\etrans{\sigma(x_n)}{}{}]}\\
=& \appT{\mIfZn{n}}{[\Lookup\trans{L}{}{x_1,\dots,x_n},\Lookup\trans{N_1}{}{x_1,\dots,x_n},\Lookup\trans{N_2}{}{x_1,\dots,x_n}, \Lookup\etrans{\sigma(x_1)}{}{},\dots, \Lookup\etrans{\sigma(x_n)}{}{}]}\\
\reddh& \Tuple{R_0 = \Lookup\etrans{\sigma, L}{}{},R_1 = \Lookup\etrans{\sigma,N_1}{}{},R_2 = \Lookup\etrans{\sigma,N_2}{}{},\vec R, \Ifz 0120;\Call 0,[]},\\
&\textrm{by Lemma~\ref{lem:existenceofeams}(\ref{lem:existenceofeams5}),}\\
\reddh& \Tuple{R_0 = {0},R_1 = \Lookup\etrans{\sigma,N_1}{}{},R_2 = \Lookup\etrans{\sigma,N_2}{}{},\vec R, \Ifz 0120;\Call 0,[]}\\
\reddh& \etrans{\sigma,N_1}{}{}
\ear
\]
We conclude since, by IH, $\etrans{\sigma, N_1}{}{}\convh\trans{V}{}{}$.

\item[Case $\fixrule$:] Then $M = \fix M'$ and $\substseq{\sigma}{M'\cdot(\fix M')}$. Then
\[
\bar{lcll}
\etrans{\sigma, \fix M'}{}{}&=& \appT{\trans{\fix M'}{}{x_1,\dots,x_n}}{[\Lookup\etrans{\sigma(x_1)}{}{},\dots,\Lookup\etrans{\sigma(x_n)}{}{}]}\\
&=& \appT{\mYn{n}}{[\Lookup\trans{M'}{}{x_1,\dots,x_n},\Lookup\etrans{\sigma(x_1)}{}{},\dots,\Lookup\etrans{\sigma(x_n)}{}{}]}\\
&\reddh& \appT{\trans{M'}{}{\vec x}}{[\Lookup\etrans{\sigma(\vec x)}{}{},\Lookup(\appT{\mYn{n}}{[\Lookup\trans{M'}{}{\vec x},\Lookup\etrans{\sigma(\vec x)}{}{}]})]}\\
&=& \appT{\trans{M'}{}{\vec x}}{[\Lookup\etrans{\sigma(\vec x)}{}{},
\Lookup(
\appT{\trans{\fix M'}{}{\vec x}}{[\Lookup\etrans{\sigma(\vec x)}{}{}]}
)]},&\textrm{by Lemma~\ref{lem:existenceofeams}\eqref{lem:existenceofeams2}},\\
&{}_\mach{c}\!\!\twoheadleftarrow& \appT{\mAppn{n}}{[\Lookup\trans{M'}{}{\vec x}, \Lookup\trans{\fix M'}{}{\vec x}, \Lookup\etrans{\sigma(x_1)}{}{},\dots,\Lookup\etrans{\sigma(x_n)}{}{}]}\\
&=&\appT{\trans{M'\cdot(\fix M')}{}{x_1,\dots,x_n}}{[\Lookup\etrans{\sigma(x_1)}{}{},\dots,\Lookup\etrans{\sigma(x_n)}{}{}]}\\
&=&\etrans{\sigma, M'\cdot(\fix M'}{}{}.
\ear
\]
\end{description}
This concludes the proof. 
\end{proof}

\begin{proof}[Proof of Theorem~\ref{thm:delayed_equivalence}(\ref{thm:delayed_equivalence1})]
For an \EPCF{} term $M$, an \EPCF{} value $V$ and an explicit substitution $\sigma$, we show
$
	\substseq{\sigma}{ M} \reddd V \Rightarrow (\sigma, M)^* \redd ([\,], V)^*
$
by induction on a derivation of $\substseq{\sigma}{ M} \reddd V$.
\begin{description}
\item[Case $\valrule$:] In this case $M = V = \num n$ for some $n\ge 0$. 
By definition, $(\sigma, \num n)^* = \num n$, so we apply \PCF's rule $\PCFvalrule$ and get $\num n  \redd \num n$.\medskip
\item[Case $\funrule$:] We have $M = \lam x.\esubst{M'}{\sigma'}$ and $V = \lam x.\esubst{M'}{\sigma + \sigma'}$. \\
As $(\sigma,\lam x.\esubst{M'}{\sigma'})^* = ([\,], \lam x.\esubst{M'}{\sigma + \sigma'})^* = \lam x.(\sigma + \sigma', M')^*$ and the latter is a \PCF{} value,  we can apply \PCF's rule $\PCFvalrule$ to conclude 
$
	(\sigma,\lam x.\esubst{M'}{\sigma'})^* \redd ([\,], \lam x.\esubst{M'}{\sigma + \sigma'})^*.
$

\item[Case $\varrule$:] In this case $M = x$, $\sigma(x) = (\rho,N)$ and $\substseq{\rho} N \reddd V$.
By IH $(\rho,N)^*\redd ([],V)^*$. We conclude since $(\sigma,M)^* = (\rho,N)^*$.\medskip

\item[Case $\betarule$:] In this case $M = M_1\cdot M_2$ with $\substseq{\sigma}M_1 \reddd \lam x.\esubst{M_1'}{\rho}$, wlog $x\notin\dom(\rho+\sigma)$, and $\substseq{\rho + [x \leftarrow (\sigma, M_2) ]}{M_1'}\reddd  V$. By IH we obtain $(\sigma, M_1)^* \redd ([\,], \lam x.\esubst{M_1'}{\rho})^*$ and $(\rho + [x \leftarrow (\sigma, M_2)], M_1')^* \redd ([\,], V)^*$. 
By Lemma~\ref{app:lem:substbizarre}\eqref{app:lem:substbizarre1} $(\rho + [x \leftarrow (\sigma, M_2)], M_1')^* = (\rho,M_1')^*\subst{x}{(\sigma, M_2)^*}$. 
By definition, $([\,], \lam x.\esubst{M_1'}{\rho})^* = \lam x.(\rho, M_1')^* $ and $ (\sigma, M_1)^*\cdot(\sigma, M_2)^*= (\sigma, M_1 \cdot M_2)^*$.  Thus 
\[
	\infer[\betarule]{(\sigma, M_1\cdot M_2)^*\redd ([\,], V)^*}{(\sigma, M_1)^* \redd \lam x.(\sigma', M_1')^* & (\sigma',M_1')^*\subst{x}{(\sigma, M_2)^*} \redd ([\,], V)^*}
\]
\item[Case $\fixrule$:] In this case $M = \fix {N}$ and $\substseq{\sigma}{N \cdot (\fix N)} \reddd V$. 
From the IH we get $(\sigma, N \cdot (\fix N))^* \redd ([\,], V)^*$. 
By definition, we have 
$
	(\sigma, N \cdot (\fix N))^* = (\sigma, N)^* \cdot (\fix (\sigma, N)^*)
$ and $\fix (\sigma, N)^* = (\sigma, \fix N)^*$. By applying \PCF's rule $\fixrule$, we obtain
\[
	\infer[\fixrule]{(\sigma,\fix N)^* \redd ([\,], V)^*}{(\sigma, N \cdot (\fix N))^* \redd ([\,], V)^*}
\]
\end{description}
All other cases derive straightforwardly from the IH.
\end{proof}

\begin{proof}[Proof of Theorem~\ref{thm:delayed_equivalence}(\ref{thm:delayed_equivalence2})]
For a \PCF{} program $P$ and \PCF{} value $U$, we show that $P \redd U$ entails:
\[
	\forall (\sigma, M) \in P^\dagger,\, \exists V \in \Val\,.\,(\ \substseq{\sigma}{M}\reddd V \textrm{ and } ([\,], V)\in U^\dagger \ )
\]
By induction on the lexicographically ordered pairs, whose first component is the length of a derivation of $P \redd U$ and second component is $\size{(\sigma,M)}$.

First, consider the case $M = y$ and $\sigma(y) = (\rho,N)$, with $(\rho,N)\in P^\dagger$. 
In this case, the length of the derivation $(\rho,N)^*\redd U$ remained unchanged, while $\size{(\rho,N)} < \size{(\sigma,M)}$. 
Thus, we may use the IH and conclude by applying $\varrule$. Therefore, in the following we assume that $M$ is not a variable.

\begin{description}
\item[Case $\PCFvalrule$:] In this case $P = U$. Given $(\sigma,M)\in P^\dagger$, we distinguish several cases:
\begin{itemize}
\item Case $P = U = \lam x.P_0$. Then, $M = \lam x.\esubst{M_0}{\rho}$ with $P_0 = (\sigma+\rho,M_0)^*$. Setting $V = \lam x.\esubst{M_0}{\sigma+\rho}$ we obtain $M\reddd V$ by $\funrule$ with 
$
	([],V)^* = \lam x.(\sigma+\rho,M_0) = \lam x.P_0 = U.
$
\item
	$P = U = \mathbf{0}$. It follows $M = V = \mathbf{0}$ and $\substseq{\sigma}{\mathbf{0}}\reddd\mathbf{0}$ by $\valrule$.

\item $P = U = \succ(\num n)$, for some $ n \in \nat$, and $M$ is not a variable. 

	There are two possibilities: 
	\begin{itemize}\item $M = \succ(\num n)$ in which case we are done, since $\substseq{\sigma}{\num{n+1}}\reddd\num{n+1}$.
	\item $M = \succ(\num y)$ with $\sigma(y) = (\rho,N)$ and $(\rho,N)\in \num n^\dagger$.
	Again, the length of the derivation $(\rho,N)^*\redd \num n$ is unchanged, while $\size{(\rho,N)} < \size{(\sigma,M)}$. 
	Once applied the IH, we conclude by $\varrule + \succrule$.
	\end{itemize}
\end{itemize}\medskip
\item[Case $\betarule$:] $P = P_1\cdot P_2$ with $P_1\redd \lam x.Q_1$ and $Q_1\subst{x}{P_2}\redd U$ for some $Q_1$. 
Notice that, since $P$ is closed so are $P_1,P_2$ and hence $\FV{Q_1}\subseteq\set{x}$.
Now, $(\sigma,M)^* = P$ entails $M = M_1\cdot M_2$ with $(\sigma,M_1)\in P_1^\dagger$ and $(\sigma,M_2)\in P_2^\dagger$.
By ind.\ hyp., there is $V'\in\Val$ such that $\substseq{\sigma}{M_1}\reddd V_1$ with $([],V_1)\in (\lam x.Q_1)^\dagger$.
This implies $V_1 = \lam x.\esubst{N_1}{\rho}$ for some $(\rho,N_1)\in Q_1^\dagger$.
By Lemma~\ref{app:lem:substbizarre}\eqref{app:lem:substbizarre2}, we get $(\rho+[x\leftarrow(\sigma,M_2)],N_1)\in (Q_1\subst{x}{P_2})^\dagger $. 
By ind.\ hyp., there is $V\in\Val$ such that $\substseq{\rho+[x\leftarrow(\sigma,M_2)]}{N_1}\reddd V$ and $V\in U^\dagger$. Conclude by \EPCF's $\betarule$.\medskip

\item[Case $\fixrule$:] $P = \fix Q$ and $Q\cdot(\fix Q)\redd V$. Then $(\sigma,M)\in P^\dagger$ entails $M = \fix N$ with $(\sigma,N)\in Q^\dagger$. It follows that $(\sigma,N\cdot(\fix N))\in(Q\cdot(\fix Q))^\dagger$, therefore by IH we get $\substseq{\sigma}{N\cdot(\fix{N})}\reddd V$ for some $V\in U^\dagger$.
We conclude by applying \EPCF's rule $\fixrule$.
\end{description}
All other cases derive straightforwardly from the IH.
\end{proof}

\end{document}